\documentclass[11pt,a4paper]{article}

\usepackage{amsmath,amssymb,amsthm}
\usepackage{mathtools}
\usepackage{geometry}
\usepackage[hidelinks]{hyperref}
\usepackage{booktabs}
\usepackage{enumitem}
\usepackage{microtype}
\usepackage{graphicx}

\newtheorem{theorem}{Theorem}
\newtheorem{proposition}{Proposition}

\newtheorem{corollary}{Corollary}

\newtheorem{remark}{Remark}

\newcommand{\calC}{\mathcal{C}}
\newcommand{\calH}{\mathcal{H}}
\newcommand{\calX}{\mathcal{X}}
\newcommand{\calY}{\mathcal{Y}}
\newcommand{\calT}{\mathcal{T}}
\newcommand{\E}{\mathbb{E}}
\newcommand{\IQ}{I_Q(X;Y)}
\newcommand{\EFA}{E_{\mbox{\tiny FA}}}
\newcommand{\EMD}{E_{\mbox{\tiny MD}}}
\newcommand{\Dc}{D_{\mbox{\tiny c}}(Q_{XY})}
\newcommand{\Dm}{D_{\mbox{\tiny m}}(Q_Y)}

\begin{document}
\thispagestyle{empty}
\title{\textbf{Soft Covering Through the Lens of Hypothesis Testing}}
\author{Neri Merhav}

\date{\today}
\maketitle

\begin{center}
The Viterbi Faculty of Electrical and Computer Engineering\\
Technion - Israel Institute of Technology \\
Technion City, Haifa 3200003, ISRAEL \\
E--mail: {\tt merhav@technion.ac.il}\\
\end{center}

\vspace{1.5\baselineskip}
\setlength{\baselineskip}{1.5\baselineskip}

\begin{abstract}
The soft covering lemma asserts that a random codebook whose rate $R$
exceeds the input-output mutual information, $I(X;Y)$, of a given discrete memoryless channel (DMC),
causes the output mixture distribution to be statistically indistinguishable
from the i.i.d.\ output distribution.
We study this phenomenon through the lens of Neyman--Pearson hypothesis
testing: given a channel output sequence $y^n$, can one decide whether it was
produced when the channel was driven by a random codeword, or generated
independently from the output marginal?
We derive exact exponential decay rates for the jointly averaged
false-alarm (FA) probability $\alpha_n(\tau,R)$ and
missed-detection (MD) probability $\beta_n(\tau,R)$,
as functions of the decision threshold $\tau$ and the codebook rate $R$.
The derived single-letter formulas of the exponents $\EFA(\tau,R)=-\lim_{n\to\infty}\frac{1}{n}\ln\alpha_n(\tau,R)$
and $\EMD(\tau,R)=-\lim_{n\to\infty}\frac{1}{n}\ln\beta_n(\tau,R)$
are tight in the random coding sense.
The analysis reveals a rich phase structure.
For $R < I(X;Y)$, there is a genuine exponential tradeoff between the
two error types over the interval $\tau \in (0, I(X;Y)-R)$.
At $R = I(X;Y)$, this tradeoff interval collapses to the single point
$\tau = 0$, where both error exponents simultaneously vanish, a fact which
manifests the soft covering phenomenon in the Neyman--Pearson sense.
For $R > I(X;Y)$, the same instantaneous collapse persists at $\tau = 0$;
moreover, for every $\tau$ at least one exponent is zero:
the FA exponent is zero for $\tau \le 0$ (FA probability does not
decay exponentially), and the MD exponent is zero
for $\tau \ge 0$ (and finite, channel-specific for $\tau<0$; see Remark~\ref{rem:jump}).
There is no interval of $\tau$ where both exponents are simultaneously positive.
A sharp phase transition in the MD exponent occurs at $\tau^* = [I(X;Y)-R]_+$
for all rates.\\

\noindent
{\bf Index Terms:} soft covering, hypothesis testing, error exponents, phase
transitions.
\end{abstract}

\section{Introduction}
\label{sec:intro}

The soft covering lemma, introduced by Han and Verd\'u~\cite{HanVerdu93},
is a cornerstone of information theory.
It asserts that a random codebook $\calC$ of rate $R > I(X;Y)$ causes
the channel output mixture distribution $P_{Y^n|\calC}$ to become statistically
indistinguishable from the i.i.d.\ output distribution $P_Y^{\otimes n}$,
in the sense that their total variation distance vanishes exponentially.
This result underpins key results in channel coding, wiretap secrecy,
common randomness generation, and random-binning arguments.
Despite its importance, the classical soft covering results address
only the regime $R > I(X;Y)$.
The regime $R < I(X;Y)$ --- where the codebook rate is below the
mutual information --- has received comparatively little attention,
and it turns out that it is considerably the more challenging and structurally richer regime.

In this paper, we study the soft covering phenomenon across \emph{all}
rates $R \geq 0$ by formulating it as a Neyman--Pearson hypothesis
testing problem:
Given an observed output sequence $y^n$, we ask: was it produced by
the channel $W$ driven by a randomly chosen codeword from $\calC$
(hypothesis $\calH_1$), or was it drawn independently from the marginal
$P_Y^{\otimes n}$ (hypothesis $\calH_0$)?
The Neyman--Pearson test, based on the log-likelihood ratio (LLR) $\Lambda(y^n)$,
decides in favor of $\calH_1$ when $\Lambda(y^n) \geq \tau$, where $\tau$
is a threshold tuned to obtain a prescribed tolerable FA probability.
The two error events are then:
false alarm (FA): deciding in favor of $\calH_1$ when $\calH_0$ is true, and
missed detection (MD): deciding in favor $\calH_0$ when $\calH_1$ is true.
We derive the exact exponential decay rates $\EFA(\tau,R)$ and
$\EMD(\tau,R)$ of the jointly averaged FA and MD error probabilities
$\alpha_n(\tau,R)$ and $\beta_n(\tau,R)$, to be defined formally in
Section~\ref{sec:notation}.

The analysis reveals a considerably surprising phase structure in the $(\tau,R)$ plane,
governed by two critical thresholds: $\tau=0$ and $\tau^*=
\max\{0,I(X;Y)-R\}$.
For $\tau<0$, the FA exponent vanishes (unless the channel has some
singularity like the Z-channel) and
the MD exponent is positive and channel-specific (see Remark~\ref{rem:jump}). At $\tau=0$, the picture depends critically on $R$.
The FA exponent is always zero at $\tau=0$ (with the above digression for
singular channels), for every $R>0$.
The MD exponent at $\tau=0$ undergoes a phase transition at $R=I(X;Y)$:
it is strictly positive for $R<I(X;Y)$, and vanishes for $R\ge I(X;Y)$.
The case $R=I(X;Y)$, $\tau=0$, where \emph{both} exponents simultaneously
vanish, is the Neyman--Pearson exponent formulation of the soft covering phenomenon:
at exactly the soft covering rate, neither error decays exponentially at threshold $\tau=0$.
For $\tau>0$, both error exponents can be positive, and the picture depends on $R$.
For $R < I(X;Y)$,
there is a genuine tradeoff interval $\tau\in(0,\,I(X;Y)-R)$
where both error exponents are simultaneously positive.
Beyond this interval (for $\tau\ge I(X;Y)-R$), the MD exponent drops to zero.
The tradeoff interval has width $I(X;Y)-R$, which shrinks to zero as
$R\nearrow I(X;Y)$.
For $R \ge I(X;Y)$
at least one of the two exponents is zero no matter what the value of $\tau$
may be, in other words,
there is no interval where both exponents are simultaneously positive.
Specifically: for $\tau\le 0$, the FA exponent is zero (FA probability
does not decay exponentially); for $\tau\ge 0$, the MD exponent is zero (the codeword
output is statistically easy to detect).
The FA exponent is strictly positive for $\tau>0$ and grows with $\tau$,
but over that same range the MD exponent is identically zero.
The two zero-regions cover the entire real line, overlapping only at
$\tau=0$ where both exponents vanish simultaneously.

A structural asymmetry between the two exponents emerges from the formulas
(stated precisely in Theorem~\ref{thm:main}).
The FA exponent penalizes both the deviation of the channel output empirical distribution
from the i.i.d.\ marginal, and the rate surplus of a codeword's mutual information
over $R$, reflecting the rarity of the event that a noise sequence
looks like a codeword output.
The MD exponent penalizes only the deviation of the empirical channel
from the true channel $W$, with no explicit rate term, reflecting the rarity
of the event that a codeword output goes undetected.
The proofs are based on large deviations properties concerning type-class enumerators
\cite[Chapter 4]{MW25}. 
These enumerators are binomial random variables with
exponentially many trials and exponentially decaying probabilities of success.



A few words on earlier related work are in order.

Han and Verd\'u~\cite{HanVerdu93} introduced channel resolvability and
established that the minimum rate $R$ for which the mixture distribution
$P_{Y^n|\calC}$ can approximate $P_Y^{\otimes n}$ (in total variation or
normalized KL divergence) equals $I(X;Y)$; this is the first-order result
that forms the foundation of all subsequent work.
The soft-covering lemma first appeared as Theorem~6.3 of
Wyner~\cite{Wyner75}, where it serves as a technical tool
for the achievability proof of the common information theorem;
it was later recognized as a central technique in wiretap secrecy,
identification coding, and channel synthesis, and made the
subject of systematic study by Han and Verd\'u~\cite{HanVerdu93}
under the name of channel resolvability.
A substantial body of later work derives the exact \emph{exponential} rate of
convergence of $P_{Y^n|\calC}$ to $P_Y^{\otimes n}$ for $R>I(X;Y)$,
under various distance measures.
Hayashi~\cite{Hayashi06} obtained a lower bound to the exponent under KL divergence.
Parizi {\em et al.} \cite{PTM17} derived the exact exponent of th KL
divergence with application to the wiretap
channel. Yu and Tan~\cite{YuTan19} characterized the exact exponent under
R\'enyi divergence of order $\alpha\in[0,2]$ (which includes the
Kullback-Leibler (KL) divergence as the limiting case $\alpha\to 1$) for
i.i.d.\ random codes.
Yagli and Cuff~\cite{YagliCuff19} established the exact exponent under
total variation distance.
Recently, Li {\em et al.} \cite{LLY26} derived a strong-converse exponent under the
KL divergence. One a somewhat different research route,
Cuff~\cite{CuffStrong15,Cuff16} moved beyond the expected-value analysis
to show that soft covering holds with probability doubly exponentially
close to unity over the random codebook, enabling applications via the
union bound. Cuff~\cite{Cuff13} also developed the theory of distributed
channel synthesis, which relies on and strengthens the soft-covering lemma.

As mentioned above, Yu and Tan~\cite{YuTan19} characterized the exact
exponential decay of the R\'enyi divergence of order 
$\alpha\in(0,2)$ and $R>I(X;Y)$.
There is a precise connection to hypothesis testing:
the R\'enyi divergence of order $\alpha$ evaluated at the
optimal $\alpha\in(0,1)$ corresponds to the \emph{Chernoff exponent}
(symmetric Bayesian error exponent)
of testing $P_Y^{\otimes n}$ against $P_{Y^n|\calC}$.
In Neyman--Pearson terms, this is the exponent achieved at the \emph{specific threshold}
$\tau$ that equalizes $\EFA(\tau,R)$ and $\EMD(\tau,R)$ (the Chernoff point); it does not
give the full Neyman--Pearson operating characteristics.

The most closely related prior work is~\cite{WM14}, which considers a
model where a transmitter either sends a codeword from a random fixed-composition
codebook of rate $R$, or is silent, outputting the all-zero vector $\mathbf{0}$.
The receiver must jointly detect whether transmission occurred and, if so, decode
the message. The figures of merit are the FA probability (deciding transmission
when silent), the MD probability (deciding silent when transmitting), and
a decoding error probability.
For a fixed composition random codebook, \cite{WM14} derives the optimal
detector/decoder in an extended Neyman--Pearson sense and characterizes the exact random
coding exponents of all three error probabilities as functions of the rate $R$
and two threshold parameters $\alpha,\beta\in\mathbb{R}$.
The analysis in~\cite{WM14} is based on the type-class enumeration method, the same
fundamental tool used here.
The similarities with the present work are as follows.
Both papers formulate the problem model as a
Neyman--Pearson binary hypothesis testing ($\calH_0$: pure noise, $\calH_1$: codeword output),
both use random fixed-composition codebooks, and both derive exact random-coding
exponents via type-class enumeration. However, at the same time,
there are several differences.
The first is that in \cite{WM14}, under $\calH_0$ the transmitter output is a
repetitive symbol $0$ that designates silence (no transmission),
so under the null hypothesis, the channel output $y^n$ is a response to the
all-zero input vector, unlike the present work where $y^n$ is the channel
response to an i.i.d.\ input source, $P_X$.
The second difference is that the analysis in~\cite{WM14} is valid for all $R\ge 0$, but is
most natural for $R>0$ with a non-trivial decoding task.
The regime $R<I(X;Y)$ is not highlighted, and the soft-covering
threshold $R=I(X;Y)$ does not play a special role in~\cite{WM14}
since the hypotheses are always different
regardless of $R$.
In the present work, $R=I(X;Y)$ is the central threshold: it is
where both error exponents simultaneously vanish, characterizing the
soft covering phenomenon in Neyman--Pearson terms.
Third, in \cite{WM14} there is also a characterization of the exponent of the
decoding error probability, which has no analogue here. Finally,
in~\cite{WM14}
there is no phase transition
analogous to the one at $\tau=0$ in the present work.
Here, the behavior of $\EMD(\tau,R)$ for $\tau\le 0$
is a distinctive feature of the problem at hand.

The present paper differs from all prior work in the following respects.
\begin{enumerate}
\item \emph{Full Neyman--Pearson tradeoff for all $\tau$.}
All prior exponent results for resolvability correspond to a fixed
scalar distance measure (KL, total variation, R\'enyi), which is the
expected distance averaged over both $\calC$ and $y^n$.
We instead characterize the complete Neyman--Pearson operating characteristic:
the pair $(\EFA(\tau,R),\EMD(\tau,R))$ for every threshold $\tau$,
giving a full tradeoff curve rather than a single operating point.

\item \emph{All rates $R\ge 0$.}
Every previous exponent result for resolvability requires $R>I(X;Y)$.
We handle all $R\ge 0$, including the unexplored regime $R<I(X;Y)$
where the codebook output is more concentrated than i.i.d.\ and where
there is a genuine Neyman--Pearson tradeoff between FA and MD.

\item \emph{Proof method and exactness.}
Prior analyses apply Chernoff-type or Gallager-style bounding to the
channel output distribution, which is a mixture of exponentially many conditional
output distributions given the various input codewords.
The Chernoff bound applied to a mixture of channel outputs
is not automatically tight: prior papers establish tightness by a
separate converse argument specific to each distance measure.
In the present work, we avoid Chernoff bounds altogether.
Instead, we apply the type-class enumeration method
\cite{MW25}, which yields the \emph{exact}
exponential rate.
\end{enumerate}
A statistical-physics perspective on soft covering across all rates,
including connections to phase transitions, appears in~\cite{Merhav26}.

The outline of the remaining part of this work is as follows.
Section~\ref{sec:notation} establishes notation conventions and provides some
background.
Section~\ref{sec:main} states the main theorem, whose proof appears in
the appendix~\ref{sec:proof}.
Section~\ref{sec:phase} develops the phase structure and the connection
to soft covering.
Section~\ref{sec:numerics} illustrates the results on the Z-channel.
Finally, in Section \ref{sec:conclusion}, we summarize the main findings and
speculate on possible future research directions.

\section{Notation Conventions and Basic Background}
\label{sec:notation}

Let $\calX$ and $\calY$ denote finite input and output alphabets.
Sequences of length $n$ are written in boldface-free lowercase, i.e.,
$x^n=(x_1,\ldots,x_n)\in\calX^n$ and $y^n=(y_1,\ldots,y_n)\in\calY^n$, where
$\calX^n$ and $\calY^n$ are the $n$-th Cartesian powers of $\calX$ and
$\calY$, respectively.
The \emph{type} of a sequence
$x^n\in\calX^n$ is the empirical probability distribution
$\hat P_{x^n}(a)=\frac{1}{n}\#\{i:x_i=a\}$, $a\in\calX$.
The equivalence set of all sequences in $\calX^n$ of type $P$ is called the
\emph{type class} $\calT(P)\subseteq\calX^n$.
For a sequence pair $(x^n,y^n)\in\calX^n\times\calY^n$, the \emph{joint
type} $Q_{XY}$ is the empirical distribution
$Q_{XY}(a,b)=\frac{1}{n}\#\{i:x_i=a,\,y_i=b\}$, $a\in\calX$, $b\in\calY$.
Its marginals are denoted $Q_X$ and $Q_Y$, and the conditional type
$Q_{Y|X}$ is defined according to $Q_{XY}(a,b)=Q_X(a)Q_{Y|X}(b|a)$.
We always restrict to joint types with $Q_X=P_X$, where $P_X$ is some fixed input
distribution. Throughout, $\hat P_{y^n}$ denotes the empirical distribution
associated with $y^n$.

Information measures induced by a given probability distribution
will be subscripted by the notation of this distribution. When this
is an empirical distribution $Q_{XY}$, the subscript will be
abbreviated by $Q$, in order to avoid cumbersome notation. Thus,
$H_Q(Y)=-\sum_y Q_Y(y)\log Q_Y(y)$ is the marginal empirical entropy of an
auxiliary random vector $Y$ governed by $Q_Y$,
$H_Q(Y|X)=-\sum_{x,y}Q_{XY}(x,y)\log Q_{Y|X}(y|x)$ is the
conditional empirical entropy of $Y$ given $X$, where $(X,Y)$ are jointly governed by $Q_{XY}$, and
$I_Q(X;Y)=H_Q(Y)-H_Q(Y|X)$ is the mutual information under $Q_{XY}$.

The discrete memoryless channel (DMC) with a finite input alphabet
$\calX$ and finite output alphabet $\calY$ will be denoted by
$W:\calX\to\calY$. When $W$ is fed by an $n$-vector $x^n\in\calX^n$, the
corresponding channel output vector $y^n$ is distributed according to
\begin{equation}
W^n(y^n|x^n)=\prod_{i=1}^n W(y_i|x_i).
\end{equation}
The input distribution $P_X$ is fixed throughout, with output marginal
\begin{equation}
\label{outmarginal}
P_Y(y)=\sum_x P_X(x)W(y|x)
\end{equation}
and mutual information
\begin{equation}
I(X;Y)=\sum_{x,y}P_X(x)W(y|x)\log\frac{W(y|x)}{P_Y(y)}.
\end{equation}
We also adopt the following shorthand notations.
\begin{equation}
D_{\mbox{\tiny m}}(Q_Y)=D(Q_Y\|P_Y)=\sum_y Q_Y(y)\log\frac{Q_Y(y)}{P_Y(y)}
\end{equation}
is the KL divergence between $Q_Y$ and $P_Y$ defined in
\eqref{outmarginal}. Also,
\begin{equation}
D_{\mbox{\tiny c}}(Q_{XY})=D(Q_{Y|X}\|W|P_X)
=\sum_{x,y}P_X(x)Q_{Y|X}(y|x)\log\frac{Q_{Y|X}(y|x)}{W(y|x)}
\end{equation}
is the conditional KL divergence between $Q_{Y|X}$ and the channel $W$ with
weighting $P_X$, and
\begin{equation}
\ell(Q_{XY})=H_Q(Y|X)+D_{\mbox{\tiny c}}(Q_{XY}) 
\end{equation}
is a notation used throughout in the proofs.

The data processing inequality (DPI) of the KL divergence implies that 
\begin{equation}
\label{dataprocessing}
D_{\mbox{\tiny m}}(Q_Y)\le D_{\mbox{\tiny c}}(Q_{XY}).
\end{equation}
To see why this is true, observe that since $Q_X=P_X$,
\begin{eqnarray}
D_{\mbox{\tiny c}}(Q_{XY})&=&
D(Q_{XY}\|P_X\otimes W)-D(Q_X\|P_X)\nonumber\\
&=&D(Q_{XY}\|P_X\otimes W)-D(P_X\|P_X)\nonumber\\
&=&D(Q_{XY}\|P_X\otimes W)\nonumber\\
&\ge&D(Q_Y\|P_Y)\nonumber\\
&=&D_{\mbox{\tiny m}}(Q_Y),
\end{eqnarray}
where the inequality is the DPI of the KL divergence applied to the
marginalization map $(x,y)\mapsto y$.

For two positive sequences, $\{a_n\}_{n\ge 1}$ and $\{b_n\}_{n\ge 1}$,
we write $a_n\doteq b_n$ to mean
$\lim_{n\to\infty}\frac{1}{n}\log\frac{a_n}{b_n}=0$.
Accordingly, $a_n\doteq 0$ means that $a_n$ decays faster than exponentially (e.g.,
doubly exponentially) and $a_n\doteq 1$ tells that $a_n$ varies (grows or
decays) at a sub-exponential rate, or even tends to a strictly positive constant.
Throughout this paper, all logarithms are natural unless specified otherwise.
We also denote the positive clipping operator by $[\cdot]_+$, which is defined
by $[a]_+=\max\{a,0\}$ for any real $a$.

A random codebook $\calC=\{x^n(m)\}_{m=1}^M$, $M=e^{nR}$,
has codewords drawn independently at random under the uniform distribution
across the type class $\calT(P_X)$.
Once the codebook $\calC$ has been randomly selected, the induced output mixture is given by
\begin{equation}
  P_{Y^n|\calC}(y^n) = \frac{1}{M}\sum_{m=1}^M W^n(y^n|x^n(m)).
  \label{eq:mixture}
\end{equation}
We denote $P_Y^{\otimes n}(y^n)=\prod_{i=1}^n P_Y(y_i)$, where the
single-letter output marginal $P_Y$ is induced by $P_X$ and $W$ as in
(\ref{outmarginal}).

In this paper, we focus on the following Neyman--Pearson hypothesis testing problem:
Under hypothesis $\mathcal{H}_0$, $y^n$ is governed by $P_Y^{\otimes n}$, and under
hypothesis $\mathcal{H}_1$, $y^n$ is drawn by $P_{Y^n|\calC}$.
The log-likelihood ratio (LLR) statistic is defined as
\begin{equation}
  \Lambda(y^n) = \frac{1}{n}\ln
    \frac{P_{Y^n|\calC}(y^n)}{P_Y^{\otimes n}(y^n)}.
  \label{eq:LLR}
\end{equation}
The likelihood ratio test (LRT) with threshold $\tau$ decides in favor of
$\mathcal{H}_1$ if $\Lambda(y^n)\ge\tau$; otherwise, it accepts $\mathcal{H}_0$.

The two kinds of error probabilities associated with the LRT are as follows.
The false-alarm (FA) probability is 
\begin{equation}
\alpha_n(\tau,R) = \E_{\calC}\!\left\{
\sum_{\{y^n:~\Lambda(y^n)\ge\tau\}}
P_Y^{\otimes n}(y^n)\right\}, \label{eq:FA}\\
\end{equation}
and the missed detection (MD) probability is
\begin{equation}
\beta_n(\tau,R) = \E_{\calC}\!\left\{
\sum_{\{y^n:~\Lambda(y^n)<\tau\}}
P_{Y^n|\calC}(y^n)\right\},  \label{eq:MD}
\end{equation}
where in both (\ref{eq:FA}) and (\ref{eq:MD}), $\E_{\calC}\{\cdot\}$ denotes
expectation with respect to (w.r.t.) the randomness of the codebook $\calC$.
The corresponding error exponents are defined as
\begin{align}
  \EFA(\tau,R) &= -\lim_{n\to\infty}\frac{1}{n}\ln\alpha_n(\tau,R),
  \label{eq:EFA_def}\\
  \EMD(\tau,R) &= -\lim_{n\to\infty}\frac{1}{n}\ln\beta_n(\tau,R),
  \label{eq:EMD_def}
\end{align}
where the existence of these limits will become apparent from the derivations
to follow (Theorem~\ref{thm:main}).

The following quantity will be useful in the proofs.
Given the randomly selected codebook, $\calC$, and a channel output vector,
$y^n$, the
type-class enumerator (TCE) associated with type $Q_{XY}$ and $y^n$ is defined as
\begin{equation}
N(Q_{XY}|y^n) = \#\bigl\{m:(x^n(m),y^n)\in\calT(Q_{XY})\bigr\}.
\label{eq:TCE}
\end{equation}
Clearly, the randomness of $\calC$ induces randomness of $N(Q_{XY}|y^n)$. In
particular, due to the independent random selection, $N(Q_{XY}|y^n)$
is a binomial random variable with $M=e^{nR}$ trials and success rate 
given by the probability that a single randomly chosen codeword from $\calT(P_X)$
happens to have, together with $y^n$, the given joint type $Q_{XY}$. This
probability is
of the exponential order of $e^{-n\IQ}$. 
The unnormalized mixture under $\calH_1$ can be easily expressed in terms of the TCE's as
follows:
\begin{equation}
S(y^n)=\sum_{m=1}^M W^n(y^n|x^n(m))
=\sum_{Q_{XY}} N(Q_{XY}|y^n)e^{-n\ell(Q_{XY})},
\label{eq:S}
\end{equation}
and so, $P_{Y^n|\calC}(y^n)=S(y^n)/M$.
For $y^n$ of type $\hat{P}_{y^n}=Q_Y$,
using $P_Y^{\otimes n}(y^n)= e^{-n[H_Q(Y)+D_{\mbox{\tiny m}}(Q_Y)]}$:
\begin{equation}
\Lambda(y^n)=
\frac{1}{n}\log S(y^n) - R + H_Q(Y) + D_{\mbox{\tiny m}}(Q_Y).
\label{eq:LLR_S}
\end{equation}

To analyze the two kinds of probability of error, we shall invoke results
concerning the large deviations behavior of $\{N(Q_{XY})\}$. Since these are
binomial random variables, the following theorems from \cite{MW25} (with a
slight change in notation) will be useful.

\begin{enumerate}[label=(T\arabic*),leftmargin=2.5em,noitemsep]
\item\label{T1} \textbf{Theorem~4.1 of \cite{MW25}}.
For $N\sim\mathrm{Binomial}(e^{nA},e^{-nB})$ ($A>0$ and $B>0$) and $C\in\mathbb{R}$:
\begin{align}
\lim_{n\to\infty}-\tfrac{1}{n}\ln\Pr\{N>e^{nC}\}
    &= \begin{cases}[B-A]_+ & [A-B]_+\ge C,\\
         \infty & \text{elsewhere},\end{cases}
    \label{eq:T1u}\\
\lim_{n\to\infty}-\tfrac{1}{n}\ln\Pr\{N<e^{nC}\}
    &= \begin{cases}0 & A-B<C,\\
         \infty & A-B>C.\end{cases}
    \label{eq:T1l}
  \end{align}

When $A>B$, $\Pr\{N>e^{nC}\}\doteq 1$ for
$C\le A-B$, and $\Pr\{N<e^{nC}\}$ is doubly exponentially
small for $C<A-B$. When $A<B$ and $C<0$, $\Pr\{N>e^{nC}\}\doteq e^{-n(B-A)}$.
We note in passing that the case $A=B$ is the case of asymptotic Poissonianity.
In particular, if $N\sim\mathrm{Binomial}(e^{nA},\mu e^{-nA})$, then
as $n\to\infty$, the distribution of $N$ tends to a Poissonian random variable with parameter
$\mu$, i.e., $\Pr\{N=k\}\to\frac{\mu^ke^{-\mu}}{k!}$, $k=0,1,2,\ldots$.

\item\label{T3} \textbf{Theorem~4.3 of \cite{MW25}}.
Let $N_j\sim\mathrm{Binomial}(e^{nA_j},e^{-nB_j})$, $j=1,\ldots,k_n$,
with $k_n\doteq 1$.
Then for any thresholds $C_j\in\mathbb{R}$:
\begin{equation}
\Pr\Bigl\{\bigcap_j\{N_j\le e^{nC_j}\}\Bigr\}
    \;\doteq\;
    \mathbf{1}\Bigl\{\min_j(B_j-A_j+[C_j]_+)>0\Bigr\}.
    \label{eq:T3}
  \end{equation}
When $\min_j (B_j-A_j+[C_j]_+)>0$, every $\Pr\{N_j>e^{nC_j}\}$ is either
exponentially or doubly
exponentially small (by Theorem~4.1 of \cite{MW25}), and so, the union bound on
the complementary events yields
an intersection probability tending to $1$. On the other hand,
when $\min_j (B_j-A_j+[C_j]_+)\le 0$, then at least for some $j$,
$B_j-A_j+[C_j]_+\le 0$, and for that $j$,
$\Pr\{N_j\le e^{nC_j}\}\doteq 0$, which implies that
the intersection probability is a-fortiori $\doteq 0$.
Note that there are no assumptions on the statistical dependence or
independence among $\{N_j\}$.
\end{enumerate}

It should be noted that in \cite[Theorem 4.3]{MW25}, the assertion is
formulated with a common threshold parameter $C$, rather than various
thresholds $C_j$, as stated here. Nevertheless, the proof in \cite{MW25}
extends straightforwardly to allow different thresholds.

In view of those results, it is apparent that given $y^n$, a type $Q_{XY}$
with $\IQ< R$ is a type that is typically populated with exponentially many
codewords, in particular, $N(Q_{XY}|y^n)$ concentrates at the exponential
order of $e^{n[R-\IQ]}$. Such a type will be henceforth referred to as a {\em
bulk type}. By contrast, a type for which $\IQ>R$, is rarely populated, as the
probability of for $N(Q_{XY}|y^n)\ge 1$ is of the exponential order of 
$e^{-n[\IQ-R]}\to 0$. Such a type will be referred to as a {\em sparse type}.

\section{Main Result}
\label{sec:main}

For a joint type $Q=Q_{XY}$ with $Q_X=P_X$ and rate $R\ge 0$, let us
define
\begin{equation}
\lambda(Q_{XY},R)= D_{\mbox{\tiny m}}(Q_Y) - D_{\mbox{\tiny c}}(Q_{XY}) + [\IQ-R]_+.
\label{eq:lambda}
\end{equation}
By the DPI (\ref{dataprocessing}),
$\Dm\le \Dc$.
Hence for bulk types ($\IQ\le R$): $\lambda(Q_{XY},R)=D_{\mbox{\tiny
m}}(Q_Y)-D_{\mbox{\tiny c}}(Q_{XY})\le 0$.
For the true channel $Q_{Y|X}=W$, $D_{\mbox{\tiny c}}(P_X\otimes W)=D(W\|W|P_X)=0$ and
$D_{\mbox{\tiny m}}(Q_Y)=D(P_Y\|P_Y)=0$, and so,
$\lambda(P_X\otimes W,R)=[I(X;Y)-R]_+$,
which is positive for $R<I(X;Y)$, and zero for $R\ge I(X;Y)$.

Our main theorem, whose proof appears in the appendix, is the following.

\begin{theorem}\label{thm:main}
For all $R\ge 0$ and $\tau\in\mathbb{R}$:
\begin{equation}
\EFA(\tau,R)
=\min_{\substack{Q:~Q_X=P_X\\\lambda(Q_{XY},R)\ge\tau}}
\bigl\{D_{\mbox{\tiny m}}(Q_Y)+[\IQ-R]_+\bigr\}.
\label{eq:FA_exp}
\end{equation}
For $R>0$ and $\tau\in\mathbb{R}$:
\begin{equation}
\EMD(\tau,R)
=\min_{\substack{Q_{XY}:\,Q_X=P_X\\\lambda(Q_{XY},R)<\tau\\\Delta(Q_Y,R)<\tau}}\Dc,
\label{eq:MD_exp}
\end{equation}
where
\begin{equation}
\Delta(Q_Y,R)=\max_{\substack{Q'_{XY}:\,Q'_X=P_X,\,Q'_Y=Q_Y\\I_{Q'}(X;Y)\le R}}
\bigl[\Dm-D_{\mbox{\tiny c}}(Q'_{XY})\bigr].
\end{equation}
For $\tau>0$, the constraint $\Delta(Q_Y,R)<\tau$ is redundant (due to the DPI
(\ref{dataprocessing}))
and the remaining active constraints are $Q_X=P_X$ and $\lambda(Q_{XY},R)<\tau$.
\end{theorem}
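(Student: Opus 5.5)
The plan is to fix the type $Q_Y$ of $y^n$ and work conditionally on it. Since the codewords are drawn uniformly from $\calT(P_X)$, the joint law of the enumerators $\{N(Q_{XY}|y^n)\}$ depends on $y^n$ only through $\hat P_{y^n}=Q_Y$. The number of joint types is polynomial, so by (\ref{eq:S}) the sum $S(y^n)$ is governed, up to $\doteq$, by its largest term $\max_{Q_{XY}} N(Q_{XY}|y^n)e^{-n\ell(Q_{XY})}$. By (\ref{eq:LLR_S}), the event $\{\Lambda(y^n)\ge\tau\}$ is therefore, to first order in the exponent, a union over $Q_{XY}$ consistent with $Q_Y$ of the events $\{N(Q_{XY}|y^n)\ge e^{n[\tau+R-H_Q(Y)-\Dm+\ell(Q_{XY})]}\}$. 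The threshold exponent simplifies as follows:
\[
C(Q)=\tau+R-H_Q(Y)-\Dm+H_Q(Y|X)+\Dc=\tau-\bigl(\Dm-\Dc\bigr)+R-\IQ.
\]
Each enumerator is $\mathrm{Binomial}(e^{nR},e^{-n\IQ})$, so \ref{T1} applies with $A=R$ and $B=\IQ$.

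\textbf{FA exponent.} Under $\calH_0$ the probability of type class $\calT(Q_Y)$ is $\doteq e^{-n\Dm}$. Given $Q_Y$, the probability of the union is $\doteq\max_{Q_{XY}}\Pr\{N\ge e^{nC(Q)}\}$: the upper bound follows from the union bound over polynomially many types, and the lower bound from retaining a single term. By \ref{T1}, this probability is $\doteq e^{-n[\IQ-R]_+}$ provided $C(Q)\le[R-\IQ]_+$, and it is $\doteq 0$ otherwise. The condition $C(Q)\le[R-\IQ]_+$ is equivalent to $\tau\le \Dm-\Dc+[\IQ-R]_+=\lambda(Q_{XY},R)$. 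Combining the two exponents and minimizing over $Q_{XY}$ gives (\ref{eq:FA_exp}). The boundary case of equality in \ref{T1} needs care: there the event is of the form $N\ge 1$ or $N\ge e^{n[R-I]}$, and both are covered by the cases stated in \ref{T1}. Any leftover issue at the boundary is resolved by continuity of the minimum in $\tau$ and by using the density of types.

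\textbf{MD exponent.} I would write $\beta_n=\E_\calC\sum_{y^n}\frac1M\sum_m W^n(y^n|x^n(m))\mathbf 1\{\Lambda<\tau\}$. By symmetry this equals the probability, when codeword $x^n(1)$ is transmitted, that $\Lambda(y^n)<\tau$. I condition on the joint type $Q_{XY}$ of $(x^n(1),y^n)$, which has probability $\doteq e^{-n\Dc}$ because $Q_X=P_X$. Given this type, the event $\{\Lambda<\tau\}$ requires two things. First, the true codeword's own term must fall below the threshold; up to the $\doteq$ equivalence this means $\ell(Q)$ is large enough, which by the same algebra is the condition $\Dm-\Dc<\tau$. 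Second, every other enumerator $N(Q'_{XY}|y^n)$ with $Q'_Y=Q_Y$ (drawn from the remaining $M-1$ independent codewords) must satisfy $N(Q')<e^{nC(Q')}$. By \ref{T3}, this intersection has probability $\doteq 1$ if $\min_{Q'}(\IQp-R+[C(Q')]_+)>0$, and $\doteq 0$ otherwise. The condition $\IQp-R+[C(Q')]_+>0$ is automatic for sparse $Q'$ whenever $C(Q')\ge 0$. When $C(Q')<0$ it is equivalent to $\lambda(Q',R)<\tau$. For bulk $Q'$ ($\IQp\le R$) it reduces to $C(Q')>R-\IQp$, i.e.\ $\Dm-D_{\mbox{\tiny c}}(Q'_{XY})<\tau$, whose maximum over bulk $Q'$ is exactly the constraint $\Delta(Q_Y,R)<\tau$.

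The main obstacle is reconciling the true codeword's term with the sparse competitors, so that the constraints collapse to $\lambda(Q_{XY},R)<\tau$ together with $\Delta<\tau$. I would show that the sparse conditions on $Q'$ are implied by a minimizer-side argument. If the actual $Q$ is sparse, its own term contributes $\Dm-\Dc$. If some other sparse $Q'$ had $\lambda(Q',R)\ge\tau$, then the joint event "channel type $Q$ and $N(Q')\ge1$" costs $\Dc+\IQp-R$, which is at least the cost of simply choosing $Q'$ as the transmitted type in the minimization. Hence this alternative never gives a smaller exponent, and it can be absorbed into the single-type constraint. In this way the $[\IQ-R]_+$ in $\lambda$ emerges from the bulk/sparse distinction for the transmitted type. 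Finally, for $\tau>0$ the DPI (\ref{dataprocessing}) gives $\Dm-D_{\mbox{\tiny c}}(Q')\le0<\tau$, so $\Delta(Q_Y,R)\le 0<\tau$ and the $\Delta$ constraint is redundant.
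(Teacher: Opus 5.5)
Your FA argument is the paper's argument, and your MD skeleton also matches it: you condition on the transmitted joint type at cost $e^{-n\Dc}$, apply \eqref{eq:T3} to the $M-1$ interferers, get $\Delta(Q_Y,R)<\tau$ from the bulk interferers, and use the DPI for $\tau>0$. The gap is in how you obtain the constraint $\lambda(Q_{XY},R)<\tau$ in the MD exponent.

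\textbf{The own-term condition is miscomputed.} The transmitted term satisfies $e^{-n\ell(Q_{XY})}<e^{n\theta}$ iff $C(Q)>0$, i.e.\ iff $\Dm-\Dc+\IQ-R<\tau$. You wrote $\Dm-\Dc<\tau$, which is the bulk-enumerator condition $C(Q)>R-\IQ$, not the single-term condition.

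With the correct algebra there is no obstacle at all. For a sparse transmitted type the condition is exactly $\lambda(Q_{XY},R)<\tau$. If it fails, then $S(y^n)\ge e^{-n\ell(Q_{XY})}\ge e^{n\theta}$ deterministically and MD is impossible.

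For a bulk transmitted type the own-term condition is weaker. What enforces $\Dm-\Dc<\tau$ is the same-type interferers: $\widetilde N(Q_{XY}|y^n)$ sits at $e^{n(R-\IQ)}$ with doubly exponential certainty by \eqref{eq:T1l}. This is already inside your $\Delta$ constraint, since $Q_{XY}$ itself is admissible in the maximization defining $\Delta(Q_Y,R)$. This is exactly how the paper gets $\lambda<\tau$.

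\textbf{The sparse-interferer condition is misread.} For a sparse $Q'$, the quantity $I_{Q'}(X;Y)-R+[C(Q')]_+$ is positive whatever the sign of $C(Q')$, because $I_{Q'}(X;Y)-R>0$. Moreover $C(Q')=\tau-\lambda(Q'_{XY},R)$ for sparse $Q'$, so your sentence ``when $C(Q')<0$ it is equivalent to $\lambda(Q',R)<\tau$'' contradicts itself. Sparse interferers impose no constraint: the probability that any of them is populated is $\doteq e^{-n(I_{Q'}(X;Y)-R)}\to 0$, so the intersection still has probability $\doteq 1$.

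Your ``minimizer-side argument'' therefore addresses a non-issue, and as written it is not valid. The event ``channel type $Q$ and $N(Q')\ge 1$ with $\lambda(Q'_{XY},R)\ge\tau$'' is one on which MD does \emph{not} occur, so its cost says nothing about $\beta_n$. The inequality $\Dc+I_{Q'}(X;Y)-R\ge D_{\mbox{\tiny c}}(Q'_{XY})$ that you invoke also has no justification.

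Taken at face value, your conditions admit sparse transmitted types with $\Dm-\Dc<\tau\le\lambda(Q_{XY},R)$, for which MD is in fact impossible, and the patch does not remove them. Redoing the own-term comparison $-\ell(Q_{XY})$ versus $\theta$ closes the gap.
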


Two remarks concerning this theorem are in order.

\begin{remark}
\label{rem:jump}
For $\tau<0$, missed detection requires the transmitted codeword
to have a rare joint type, which is an unlikely event whose probability
decays exponentially according to $\Dc$.
Given such an atypical transmitted pair, the interfering codewords
are too weak to push the likelihood ratio above the (negative) threshold,
and so, missed detection occurs with probability $\doteq 1$ conditional on
this rare type.
Hence $\EMD(\tau,R)$ is finite and positive for $\tau<0$ (when $R<I(X;Y)$),
governed by the same formula~\eqref{eq:MD_exp} as for $\tau>0$.
As $\tau\to-\infty$: the feasible set eventually empties
and $\EMD(\tau,R)\to+\infty$.
\end{remark}

\begin{remark}[$R=0$: one codeword]\label{rem:r0}
For $R=0$ ($M=1$), the problem reduces to simple hypothesis testing
of $W^n(\cdot|x^n(1))$ vs.\ $P_Y^{\otimes n}$ averaged over
$x^n(1)\sim\mathrm{Uniform}(\calT(P_X))$.
With a single codeword, $S(y^n)=e^{-n\ell(Q_{XY})}$ is deterministic,
and both formulas~\eqref{eq:FA_exp}--\eqref{eq:MD_exp} hold with $R=0$
(the $\Delta$ constraint is vacuous since there are no interfering codewords).
In particular, $\EMD(\tau,0)$ is finite for all $\tau\in\mathbb{R}$
with no discontinuity, in contrast to $R>0$.
\end{remark}

\section{Properties and Phase Structure}
\label{sec:phase}

We begin with an elementary observation that follows directly
from Theorem~\ref{thm:main}.

\begin{corollary}[Soft covering]\label{prop:softcov}
For all $R\ge I(X;Y)$: $\EFA(0,R)=\EMD(0,R)=0$.
\end{corollary}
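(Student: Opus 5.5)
We read off both exponents from Theorem~\ref{thm:main} at $\tau=0$ by exhibiting, for each, a feasible joint type at which the objective vanishes. Since both objectives are nonnegative, each exponent then equals zero. The natural candidate is the true joint distribution $Q^*_{XY}=P_X\otimes W$, for which $\Dm=\Dc=0$, $I_{Q^*}(X;Y)=I(X;Y)$, and hence $\lambda(Q^*,R)=[I(X;Y)-R]_+=0$ whenever $R\ge I(X;Y)$. (Strictly, $Q^*$ is approximated by joint types with $Q_X=P_X$; the minimizations in \eqref{eq:FA_exp}--\eqref{eq:MD_exp} are understood in the limit over such types, so it suffices that the objective is continuous and that the constraint can be met at or near $Q^*$.)

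\textbf{FA exponent.} The constraint in \eqref{eq:FA_exp} is $\lambda(Q_{XY},R)\ge 0$, which $Q^*$ satisfies with equality. The objective is $\Dm+[\IQ-R]_+$, and at $Q^*$ it equals $0+[I(X;Y)-R]_+=0$. Hence $\EFA(0,R)\le 0$, and nonnegativity gives equality. Because the constraint is non-strict, feasibility at $Q^*$ itself is all we need. Near $Q^*$, continuity of the objective shows that the infimum over types tends to $0$.

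\textbf{MD exponent.} Here the constraints are strict, $\lambda<0$ and $\Delta<0$, so $Q^*$ itself is infeasible (its $\lambda$ is $0$). Instead we take a family of types $Q^{(\epsilon)}\to Q^*$ satisfying both strict inequalities, with $\Dc\to 0$.

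1. \emph{Choosing the family.} Take $Q^{(\epsilon)}_{Y|X}$ with $\IQ< R$, for example by mixing $W$ slightly toward the output-independent channel $P_Y$ (so $Q_Y=P_Y$ is preserved). This gives $I_{Q^{(\epsilon)}}(X;Y)<I(X;Y)\le R$, a bulk type. Then $\lambda=\Dm-\Dc=-\Dc<0$, because $Q^{(\epsilon)}_{Y|X}\ne W$. This handles the first constraint.

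2. \emph{The $\Delta$ constraint.} With $Q_Y=P_Y$ we have $\Dm=0$, so
\[
\Delta(P_Y,R)=\max_{Q'}\bigl[-D_{\mbox{\tiny c}}(Q'_{XY})\bigr].
\]
If $R>I(X;Y)$, the choice $Q'=P_X\otimes W$ is feasible and gives $\Delta=0$, which violates the strict inequality $\Delta<0$. This is the main obstacle. I would first handle it by perturbing $Q_Y$ away from $P_Y$, e.g.\ $Q_Y=P_Y+\delta$ chosen so that for every $Q'$ with marginal $Q_Y$ the DPI gap $D_{\mbox{\tiny c}}(Q')-\Dm$ is strictly positive. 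Then $\Delta(Q_Y,R)<0$ while $\Dc$ stays $O(\delta)$-small.

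3. \emph{Strictness of the DPI gap.} Equality in \eqref{dataprocessing} would require $Q'_{Y|X}(y|x)/W(y|x)$ to depend only on $y$, i.e.\ $Q'_{Y|X}(y|x)=W(y|x)g(y)$ with $\sum_y W(y|x)g(y)=1$ for all $x$. For generic small tilts $Q_Y\neq P_Y$ no such $g\neq 1$ exists, so the gap is strictly positive. By compactness of the feasible set of $Q'$, the maximum in $\Delta$ is then strictly negative.

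Letting $\delta,\epsilon\to 0$ gives $\EMD(0,R)=\inf\Dc=0$. The delicate point is the limiting argument in step 2: if $\Delta<0$ cannot be secured, the fallback is to use the operational meaning of $\beta_n$ directly. At $R\ge I(X;Y)$ the LLR of a true codeword output concentrates at $[I(X;Y)-R]_+=0$, so $\Pr\{\Lambda<0\}$ is not exponentially small.
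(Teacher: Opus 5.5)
Your FA argument is the paper's: $P_X\otimes W$ is feasible for $\lambda(Q_{XY},R)\ge 0$ at zero cost. For MD, the paper argues in one line that types near $P_X\otimes W$ with $\lambda(Q_{XY},R)<0$ and $D_{\mbox{\tiny c}}(Q_{XY})\to 0$ exist ``by continuity''. It never checks $\Delta(Q_Y,R)<0$, although Theorem~\ref{thm:main} calls that constraint redundant only for $\tau>0$. You are right that the constraint bites at $\tau=0$, and right that $Q_Y=P_Y$ violates it. This holds for every $R\ge I(X;Y)$, not only $R>I(X;Y)$, since the constraint inside $\Delta$ is $I_{Q'}(X;Y)\le R$.

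Your equality analysis of the DPI is correct. The gap equals $D(Q'_{X|Y}\|P_{X|Y}|Q_Y)$, with $P_{X|Y}$ the posterior of $P_X\otimes W$. So it vanishes iff $Q'_{Y|X}(y|x)=W(y|x)g(y)$ with $g=Q_Y/P_Y$ and $\sum_y W(y|x)g(y)=1$. With your compactness step, $\Delta(Q_Y,R)<0$ whenever $Q_Y/P_Y\notin\mathbf{1}+\ker W$ (with $W$ acting on functions of $y$). A bulk transmitted type is itself a competitor in the maximum defining $\Delta$, so $\lambda(Q_{XY},R)\le\Delta(Q_Y,R)$ and the $\lambda$-constraint comes for free. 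Where it applies, this genuinely repairs the paper's argument.

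The gap is the word ``generic'' in step~3, and it cannot be closed in general because the statement itself needs an extra hypothesis. The tilt must be realized by a transmitted type with $Q_{Y|X}(\cdot|x)\ll W(\cdot|x)$, since otherwise $D_{\mbox{\tiny c}}=\infty$. Your step-1 example, mixing with $P_Y$, already fails this for the Z-channel. Nothing guarantees that such perturbations can leave the affine set $P_Y(\mathbf{1}+\ker W)$.

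Take a deterministic channel, e.g.\ the noiseless binary channel with uniform $P_X$ (so $I(X;Y)=\ln 2$). The only type with finite $D_{\mbox{\tiny c}}$ is $P_X\otimes W$, whose $\lambda$ is $0$, so the feasible set in \eqref{eq:MD_exp} at $\tau=0$ is empty. This is not an artifact of the formula. There $\Lambda(y^n)=\frac{1}{n}\ln N(y^n)+\ln 2-R$, where $N(y^n)$ counts codewords equal to $y^n$, and the transmitted codeword forces $N\ge 1$. At $R=\ln 2$ this gives $\beta_n(0,R)=0$. For $R>\ln 2$, a miss requires $N$ to fall below its mean by a factor of order $\sqrt{n}$, which is doubly exponentially unlikely. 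Similarly, if $I(X;Y)=0$ then $\Lambda\equiv 0$ and $\beta_n(0,R)=0$.

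So $\EMD(0,R)=+\infty$ in these cases. You should state a nondegeneracy assumption on $W$ explicitly rather than appeal to genericity; the paper's ``by continuity'' step uses one tacitly as well.

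At $R=I(X;Y)$ you also leave one requirement implicit: the tilted type must remain bulk. For a sparse type, $\lambda$ contains the term $I_Q(X;Y)-R$, generically first order in the perturbation, which can outweigh the second-order DPI gap. So you need an admissible perturbation that does not increase $I_Q(X;Y)$ and still moves $Q_Y$ off the bad set. For $R>I(X;Y)$ this is automatic for small tilts.

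Finally, drop the fallback. Concentration of $\Lambda$ at $0$ says nothing about the sign of its sub-exponential part. In the noiseless example at $R=I(X;Y)$, $\Lambda\ge 0$ holds surely under $\calH_1$.
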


To see why this is true, observe that
the type $Q_{XY}=P_X\otimes W$ has $\Dm=0$,
$I_Q(X;Y)=I(X;Y)\le R$, $\Dc=0$, and $\lambda(Q_{XY},R)=0$,
so it is feasible for $\tau=0$ with zero cost, hence $\EFA(0,R)=0$. Likewise,
$\EMD(0,R)=0$ since types arbitrarily close to $P_X\otimes W$ with $\lambda(Q_{XY},R)<0$
and $\Dc\to 0$ exist by continuity.

This is the Neyman--Pearson implication of the soft covering
lemma~\cite{HanVerdu93}: at $\tau=0$, both error
exponents simultaneously vanish when $R\ge I(X;Y)$,
meaning the two hypotheses are exponentially indistinguishable.
In other words, even if the FA and MD probabilities decay as $n\to\infty$,
in this case, the rates of their decay are definitely slower than exponential.

\subsection{Phase transitions}

Both exponents have several regions of behavior as functions of $\tau$.
The following two propositions describe them.

\begin{proposition}[FA: flat and active regions]\label{prop:flat}
Let $Q^*_{XY}$ minimize $\Dm+[I_Q(X;Y)-R]_+$
over all $Q_{XY}$ with $Q_X=P_X$, and
define
\begin{equation}
  \tau_{\rm flat}(R):=\lambda(Q^*_{XY},R),
  \qquad
  \lambda_{\max}(R):=\max_{Q_{XY}:\,Q_X=P_X}\lambda(Q_{XY},R).
\end{equation}
$\EFA(\tau,R)$ has three regions:
\begin{enumerate}[label=(\roman*),noitemsep]
\item \emph{Flat:} $\tau\le\tau_{\rm flat}(R)$:
  $\EFA(\tau,R)=D_{\mbox{\tiny m}}(Q^*_{XY})+[I_{Q^*}(X;Y)-R]_+$ (constant in $\tau$).
\item \emph{Active:} $\tau\in(\tau_{\rm flat}(R),\lambda_{\max}(R))$:
  $\EFA(\tau,R)$ strictly increases in $\tau$.
\item \emph{Infinite:} $\tau>\lambda_{\max}(R)$: $\EFA(\tau,R)=+\infty$.
\end{enumerate}
\end{proposition}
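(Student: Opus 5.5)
The plan is to treat $\EFA(\tau,R)$ via formula \eqref{eq:FA_exp} in Theorem~\ref{thm:main} as the value function of a constrained minimization,
\[
F(\tau):=\min\{f(Q): Q_X=P_X,\ \lambda(Q_{XY},R)\ge\tau\},\qquad f(Q):=\Dm+[\IQ-R]_+ .
\]
Three basic facts about this program drive everything. First, $F$ is nondecreasing in $\tau$, because the feasible sets are nested and shrink as $\tau$ grows. Second, $f$ and $\lambda(\cdot,R)$ are continuous on the compact set $\{Q_{XY}:Q_X=P_X\}$, since the clipping operator and KL divergences are continuous there; I will assume $W>0$, or otherwise restrict to $Q_{Y|X}\ll W$ so that $\Dc<\infty$. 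Hence every minimum and maximum is attained, and $\lambda_{\max}(R)$ is well defined.

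The regions are then handled in turn.

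\emph{Infinite region.} For $\tau>\lambda_{\max}(R)$ no $Q$ satisfies $\lambda\ge\tau$, so the feasible set is empty and $F=+\infty$ by convention. Equivalently, on the level of \eqref{eq:FA} the event $\Lambda\ge\tau$ is asymptotically impossible.

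\emph{Flat region.} For $\tau\le\tau_{\rm flat}(R)$, the unconstrained minimizer $Q^*$ is feasible because $\lambda(Q^*,R)=\tau_{\rm flat}\ge\tau$. Therefore $F(\tau)\le f(Q^*)$. The reverse inequality $F(\tau)\ge f(Q^*)$ holds trivially, since $f(Q^*)$ is the global minimum. So $F(\tau)$ equals the stated constant.

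\emph{Active region.} For $\tau\in(\tau_{\rm flat},\lambda_{\max})$, monotonicity gives "nondecreasing" for free, so the substantive claim is strict increase. The argument has two steps.

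Step (a): the constraint is active at any minimizer. I argue by contradiction. Suppose a minimizer $\hat Q$ at level $\tau$ had $\lambda(\hat Q,R)>\tau$. Then $\hat Q$ would be a local minimizer of $f$ with no active constraint. Since $\Dm$ is convex in $Q_Y$ and $[\IQ-R]_+$ is convex in $Q_{Y|X}$ for fixed $Q_X=P_X$, $f$ is convex on the feasible polytope. A local minimizer of a convex function is global, so $f(\hat Q)=f(Q^*)$, i.e. $F(\tau)=F(\tau_{\rm flat})$. This has to be reconciled with $\tau>\tau_{\rm flat}$.

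The obstacle is that $Q^*$ need not be unique. I would therefore define $\tau_{\rm flat}$ as the maximum of $\lambda$ over the (convex, compact) set of global minimizers of $f$, and note that this is the intended reading. With that definition, any global minimizer feasible at $\tau>\tau_{\rm flat}$ would contradict maximality. Hence $F(\tau)>F_{\min}$ for every $\tau>\tau_{\rm flat}$.

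Step (b): strictness for $\tau_{\rm flat}<\tau_1<\tau_2<\lambda_{\max}$. Let $Q_2$ minimize at level $\tau_2$, so $\lambda(Q_2)\ge\tau_2>\tau_1$. Consider the segment $Q_t=(1-t)Q^*+tQ_2$. Continuity of $\lambda$ gives some $t<1$ with $\lambda(Q_t)\ge\tau_1$. Convexity gives
\[
f(Q_t)\le(1-t)f(Q^*)+tf(Q_2)<f(Q_2),
\]
where the last inequality uses $f(Q^*)<f(Q_2)$ from step (a). Hence $F(\tau_1)<F(\tau_2)$. This convex-combination argument needs only that $f$ is convex and $\lambda$ is continuous; it does not need $\lambda$ to be concave.

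The step I expect to be the real obstacle is verifying the convexity of $f$. Two points need checking: that $\IQ$ is convex in $Q_{Y|X}$ for fixed input $P_X$, which holds as the standard property of mutual information in the channel; and that the positive part of a convex function is convex, which holds since the maximum of two convex functions is convex. I also have to handle the non-uniqueness of $Q^*$ as described above. If convexity failed, I would fall back to a direct perturbation argument toward $Q^*$ using only continuity.
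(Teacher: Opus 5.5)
Your proof is correct. The flat and infinite regions match the paper's argument: the unconstrained minimizer is feasible in the first case, and the feasible set is empty in the second.

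For the active region you take a genuinely different and more complete route. The paper only asserts that the constrained minimum exceeds the unconstrained one and ``strictly increases with $\tau$ as the feasible set shrinks.'' Shrinking feasible sets give only that $\EFA(\tau,R)$ is nondecreasing, not strict increase. You supply the missing ingredient: $f(Q)=D(Q_Y\|P_Y)+[I_Q(X;Y)-R]_+$ is convex in $Q_{Y|X}$ for fixed $P_X$, and $\lambda(\cdot,R)$ is continuous. Your segment argument from a global minimizer $Q^*$ to a level-$\tau_2$ minimizer $Q_2$ then gives $\EFA(\tau_1,R)<\EFA(\tau_2,R)$ without needing the constraint set $\{\lambda\ge\tau\}$ to be convex. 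Convexity is what makes this work. Your fallback of a perturbation argument using only continuity would not suffice in general, because a nonconvex objective could have a local minimum strictly inside the constraint region and produce a flat stretch.

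Your redefinition of $\tau_{\rm flat}(R)$ as the maximum of $\lambda$ over the set of global minimizers is more than a technicality. When $W>0$ and $R>0$, every $Q$ with $Q_Y=P_Y$ and $I_Q(X;Y)\le R$ attains $f=0$. This includes $P_X\otimes P_Y$ and the mixtures $(1-t)P_X\otimes P_Y+tP_X\otimes W$ for small $t$, and $\lambda=-D(Q_{Y|X}\|W|P_X)$ strictly improves along these mixtures. So the minimizer is non-unique, and with an arbitrary choice of $Q^*$, claim (ii) fails on an interval where $\EFA$ still equals its global minimum.

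Your restriction to $Q_{Y|X}\ll W$ is consistent with how the paper implicitly treats the Z-channel, where $Q(\cdot|0)=W(\cdot|0)$ is forced. The paper's version is shorter; yours is the one that actually proves strict monotonicity and makes the statement well posed.
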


\begin{proof}
(i) For $\tau\le\tau_{\rm flat}(R)=\lambda(Q^*_{XY},R)$,
the unconstrained minimizer $Q^*_{XY}$ satisfies $\lambda(Q^*_{XY},R)\ge\tau$
and hence is feasible. Since it achieves the global minimum, the
constrained minimum equals the unconstrained one; $\EFA(\tau,R)$ is flat.
(ii) For $\tau>\tau_{\rm flat}(R)$, $Q^*_{XY}$ is infeasible.
The constrained minimum strictly exceeds the unconstrained one and
strictly increases with $\tau$ as the feasible set shrinks.
(iii) For $\tau>\lambda_{\max}(R)$, every type has
$\lambda(Q_{XY},R)\le\lambda_{\max}(R)<\tau$, so the feasible set is
empty and $\EFA(\tau,R)=+\infty$ by convention.
\end{proof}

\begin{proposition}[MD: zero, active, and divergent regions]\label{prop:MD_regions}
Let $\tau^*(R)=[I(X;Y)-R]_+$ and
$\lambda_{\min}(R)=\min_{Q_{XY}:\,Q_X=P_X}\lambda(Q_{XY},R)$.
For $R>0$, $\EMD(\tau,R)$ has three regions:
\begin{enumerate}[label=(\roman*),noitemsep]
\item \emph{Zero:} $\tau\ge\tau^*(R)$: $\EMD(\tau,R)=0$.
\item \emph{Active:} $\tau\in(\lambda_{\min}(R),\tau^*(R))$:
  $\EMD(\tau,R)$ is finite, positive, and strictly decreasing in $\tau$.
\item \emph{Infinite:} $\tau\le\lambda_{\min}(R)$: $\EMD(\tau,R)=+\infty$
  (the feasible set $\{\lambda(Q_{XY},R)<\tau\}$ is empty).
\end{enumerate}
Within the active region, there is a kink at some
$\tau_{\rm kink}(R)\in(\lambda_{\min}(R),0)$ where the minimizing type
transitions from bulk ($I_{Q^*}(X;Y)\le R$, for $\tau<\tau_{\rm kink}$)
to sparse ($I_{Q^*}(X;Y)>R$, for $\tau>\tau_{\rm kink}$).
\end{proposition}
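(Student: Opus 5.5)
The plan is to read all three regions directly off the MD formula \eqref{eq:MD_exp} of Theorem~\ref{thm:main}. The argument treats it as a constrained minimization of the continuous, convex function $\Dc$ over the feasible set $\mathcal{F}(\tau)=\{Q_{XY}:Q_X=P_X,\ \lambda(Q_{XY},R)<\tau,\ \Delta(Q_Y,R)<\tau\}$. Since $\mathcal{F}(\tau)$ grows with $\tau$, $\EMD(\tau,R)$ is automatically nonincreasing in $\tau$. The work then splits into locating where the minimum is zero, where the set is empty, and where the minimizer switches from bulk to sparse.

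\emph{Zero region.} Here I take $\tau\ge\tau^*(R)$ and use the computation after \eqref{eq:lambda}: at $Q=P_X\otimes W$ we have $\lambda=\tau^*(R)$ and $\Dc=0$.
\begin{itemize}
\item For $R<I(X;Y)$, so that $\tau^*>0$, the $\Delta$ constraint is redundant by Theorem~\ref{thm:main}. I perturb $Q_{Y|X}$ slightly so that $\lambda$ drops below $\tau^*$. For example, I move toward a channel reducing $I_Q(X;Y)$ while keeping $\Dm-\Dc$ second order; since $I_Q$ has a nonzero gradient at $W$ and the divergences are quadratic, $\lambda$ decreases at first order. This gives feasible types with $\Dc\to0$, hence $\EMD=0$.
\item For $R\ge I(X;Y)$, so that $\tau^*=0$, I reuse the argument of Corollary~\ref{prop:softcov}. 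I must also check $\Delta(Q_Y,R)<\tau$ near $P_Y$ when $\tau=0$. I would argue $\Delta(Q_Y,R)\le 0$ by the DPI, with strict inequality obtainable by choosing $Q_Y\ne P_Y$ appropriately. This is delicate, and I would verify it via strict convexity of $\Dc$ over the set $\{Q'_Y=Q_Y\}$.
\end{itemize}

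\emph{Infinite region.} For $\tau\le\lambda_{\min}(R)$, no $Q$ satisfies $\lambda(Q,R)<\tau$, so $\mathcal{F}(\tau)$ is empty and the minimum is $+\infty$ by convention.

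\emph{Active region.} For $\tau\in(\lambda_{\min},\tau^*)$ I argue three things.
\begin{itemize}
\item \emph{Finiteness.} The minimizer of $\lambda$ shows that $\{\lambda<\tau\}$ is nonempty. The $\Delta$ constraint must also be satisfiable; for $\tau\le0$ this needs a $Q_Y$ with $\Delta(Q_Y,R)<\tau$, for instance one whose fibre $\{Q':Q'_Y=Q_Y,\ I_{Q'}\le R\}$ forces a large $D_c$. I expect this to be the main obstacle. If it fails, I would redefine $\lambda_{\min}$ to include the $\Delta$ constraint.
\item \emph{Positivity.} $P_X\otimes W$ is infeasible since $\lambda(P_X\otimes W,R)=\tau^*\ge\tau$, and $\Dc=0$ only at $W$. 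Compactness of the closure then gives a positive minimum, provided the closure of $\mathcal{F}(\tau)$ avoids $W$. Strictness of $\lambda<\tau$ versus $\tau<\tau^*$ ensures this, by continuity of $\lambda$.
\item \emph{Strict decrease.} Take the minimizer $Q^\tau$ at $\tau$ and the segment from $Q^\tau$ to $P_X\otimes W$. Along it $\Dc$ strictly decreases by convexity, and $\lambda$ moves continuously from $\le\tau$ toward $\tau^*$. So for $\tau'>\tau$ some point on the segment is feasible with a smaller cost.
\end{itemize}

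\emph{Kink.} I split the minimization into bulk types ($I_Q\le R$, where $\lambda=\Dm-\Dc\le0$) and sparse types (where $\lambda$ carries the extra term $I_Q-R$), giving $\EMD=\min\{E_{\rm bulk}(\tau),E_{\rm sparse}(\tau)\}$. Bulk types can only achieve $\lambda<\tau$ when $\tau>\lambda_{\min}$, and their cost $E_{\rm bulk}$ vanishes only as $\tau\to0$. Sparse types need $I_Q>R$, which costs $\Dc$ bounded away from zero when $R\ge I(X;Y)$ but is cheap when $\tau$ is near $\tau^*$. Both branches are monotone and continuous, so they cross at a point $\tau_{\rm kink}$, and the minimum has a slope discontinuity there.

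Locating $\tau_{\rm kink}<0$ comes from showing the sparse branch wins near $\tau=0^-$ and the bulk branch wins near $\lambda_{\min}$. The second claim holds because $\lambda_{\min}$ is attained by a bulk type, as the $[\cdot]_+$ term is nonnegative. This comparison near $0^-$ is the other step I expect to need care.
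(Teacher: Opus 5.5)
\textbf{Gap 1: finiteness in the active region.} Your treatment of the zero and divergent regions, and your positivity and strict-decrease arguments, follow the paper's route. Everything is read off \eqref{eq:MD_exp}, pivoting on $\lambda(P_X\otimes W,R)=\tau^*(R)$ and on $D_{\mbox{\tiny c}}=0$ only at $P_X\otimes W$, and you are more careful than the paper's one-line justifications. The genuine gap is the one you flag yourself: finiteness on $(\lambda_{\min}(R),0]$ once the constraint $\Delta(Q_Y,R)<\tau$ is kept.

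It cannot be closed, because the claim is false in general. Take a BSC with crossover $p<1/2$, uniform $P_X$, and $0<R<\ln 2$. For every $Q_Y$, the product coupling $P_X\otimes Q_Y$ lies in the fiber defining $\Delta$, since its mutual information is $0\le R$. A direct computation gives $D_{\mbox{\tiny m}}(Q_Y)-D_{\mbox{\tiny c}}(P_X\otimes Q_Y)=\ln 2-\tfrac{1}{2}\ln\frac{1}{p(1-p)}=:\Delta_0<0$, independently of $Q_Y$, so $\Delta(Q_Y,R)\ge\Delta_0$ for all $Q_Y$. Yet the ``flip'' type $Q_{Y|X}(y|x)=\mathbf{1}\{y\ne x\}$ has $\lambda=\ln 2-R-\ln\frac{1}{p}$, which lies below $\Delta_0$ by $R+\tfrac{1}{2}\ln\frac{1-p}{p}>0$. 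Hence for $\tau\in(\lambda_{\min}(R),\Delta_0]$ the feasible set in \eqref{eq:MD_exp} is empty, and $\EMD(\tau,R)=+\infty$ inside the claimed active region.

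This is not an artifact of the formula. The $\doteq e^{nR}$ codewords of joint type $\approx P_X\otimes\hat P_{y^n}$ force $\Lambda(y^n)\ge\Delta_0-o(1)$ with doubly exponentially high probability. The paper's proof reaches (ii) only by silently taking the feasible set to be $\{\lambda(Q_{XY},R)<\tau\}$. That is legitimate when $Q_Y$ determines the joint type among types of finite $D_{\mbox{\tiny c}}$, as for the Z-channel of Section~\ref{sec:numerics}. There $\Delta(Q_Y,R)=\lambda(Q_{XY},R)$ for bulk types and $\Delta=-\infty$ for sparse ones. So you need some such hypothesis; your fallback of redefining the divergence threshold proves a different statement.

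Two smaller points. At $\tau=0$ with $R\ge I(X;Y)$, the tool you need is the equality case of the DPI ($D_{\mbox{\tiny c}}(Q')=D_{\mbox{\tiny m}}(Q'_Y)$ iff $Q'_{X|Y}=P_{X|Y}$) together with compactness, not strict convexity. For $\tau'\le 0$, your segment argument must also keep $\Delta(Q_Y,R)<\tau'$. This follows from upper semicontinuity of $\Delta(\cdot,R)$, which holds because the fibers are closed and $D_{\mbox{\tiny c}}$ is lower semicontinuous.

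\textbf{Gap 2: the kink.} Your argument here contains a false step and a step that fails.

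\emph{The false step.} Nonnegativity of $[\IQ-R]_+$ only gives $\lambda\ge D_{\mbox{\tiny m}}-D_{\mbox{\tiny c}}$; it does not make the minimizer of $\lambda$ bulk. In the BSC above, bulk types have $\lambda=H_Q(X|Y)-\epsilon\ln\frac{1}{p}-(1-\epsilon)\ln\frac{1}{1-p}$ with $\epsilon=\Pr_Q\{X\ne Y\}$. Using $H_Q(X|Y)\ge\ln 2-R$ and Fano's inequality $h_{\rm b}(\epsilon)\ge H_Q(X|Y)$, the sparse flip type lies strictly below every bulk type.

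\emph{The step that fails.} For $R>I(X;Y)$, your own two observations point the other way. The bulk cost tends to $0$ as $\tau\to 0^-$, while sparse types need $\IQ>R>I(X;Y)$ and so have $D_{\mbox{\tiny c}}$ bounded away from $0$. Hence the bulk branch, not the sparse one, wins near $\tau=0^-$, and the required crossing does not occur there. Without both endpoint comparisons, two monotone continuous branches need not cross at all.

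At best the kink can be proved for $R<I(X;Y)$, under an added assumption such as ``$\lambda_{\min}(R)$ is attained by a bulk type'' plus a single-crossing argument. The paper gives no proof of the kink either; it is read off the Z-channel plots with $R<I(X;Y)$.
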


\begin{proof}
(i) For $\tau\ge\tau^*(R)$, types near $P_X\otimes W$
(with $\Dc\to 0$ and $\lambda(Q_{XY},R)\to\tau^*(R)^-$) are feasible,
giving $\inf\Dc=0$ and $\EMD(\tau,R)=0$.
ii) Monotonicity in $\tau$ is immediate: the feasible set
$\{\lambda(Q_{XY},R)<\tau\}$ grows as $\tau$ increases, so $\EMD(\tau,R)$ is non-increasing.
Strict decrease and positivity follow from the formula~\eqref{eq:MD_exp}.
(iii) For $\tau\le\lambda_{\min}(R)$, every type has
$\lambda(Q_{XY},R)\ge\lambda_{\min}(R)\ge\tau$, so the feasible set is
empty and $\EMD(\tau,R)=+\infty$.
\end{proof}

\subsection{The Neyman--Pearson tradeoff zone}

The results above combine into a clean picture of the Neyman--Pearson tradeoff
(Corollary~\ref{cor:tradeoff} and Proposition~\ref{prop:MD_regions}).

\begin{corollary}\label{cor:tradeoff}
Let $\tau^*(R):=[I(X;Y)-R]_+$.
\begin{enumerate}[label=(\roman*),noitemsep]
\item For $R<I(X;Y)$: $\EMD(\tau,R)>0$ for all $\tau<\tau^*(R)$,
  and $\EMD(\tau,R)=0$ for all $\tau\ge\tau^*(R)$.
  Both $\EFA(\tau,R)>0$ and $\EMD(\tau,R)>0$ simultaneously
  for all $\tau\in(0,\tau^*(R))$. 
\item For $R\ge I(X;Y)$: $\tau^*(R)=0$, so $\EMD(\tau,R)=0$
  for all $\tau\ge 0$ and $\EFA(\tau,R)=0$ for all $\tau\le 0$.
  Both exponents simultaneously vanish at $\tau=0$
  (Corollary~\ref{prop:softcov}), and the Neyman--Pearson tradeoff zone is empty.
\end{enumerate}
\end{corollary}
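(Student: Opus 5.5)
The corollary follows from Theorem~\ref{thm:main}, Corollary~\ref{prop:softcov}, and Proposition~\ref{prop:MD_regions}. The one argument not already contained in those results is the strict positivity of $\EFA$ for $\tau>0$. For that I will use compactness of the set $\{Q_{XY}:Q_X=P_X\}$ together with continuity of $\Dm$, $\Dc$, $\IQ$ and $\lambda(\cdot,R)$ in $Q_{XY}$, which holds on the relevant region where $\Dc<\infty$.

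For part (i), with $R<I(X;Y)$, the zero region $\tau\ge\tau^*(R)$ is Proposition~\ref{prop:MD_regions}(i). For $\tau<\tau^*(R)$ I will not rely only on the ``active/infinite'' dichotomy, but argue positivity directly from \eqref{eq:MD_exp}. The only type with $\Dc=0$ is $P_X\otimes W$, and it has $\lambda=\tau^*(R)>\tau$. It therefore lies outside the closure $\{\lambda(Q_{XY},R)\le\tau\}$ of the feasible set $\{\lambda<\tau\}$. The extra constraint on $\Delta$ only shrinks the feasible set. Since $\Dc$ is lower semicontinuous and the closure is compact, $\Dc$ attains a strictly positive minimum there. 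This gives $\EMD(\tau,R)>0$, possibly $+\infty$ when the set is empty.

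For $\EFA(\tau,R)>0$ when $\tau>0$, I will suppose some feasible $Q$ in \eqref{eq:FA_exp} has zero cost. Then $\Dm=0$ and $[\IQ-R]_+=0$, so $Q$ is a bulk type and $\lambda(Q_{XY},R)=\Dm-\Dc=-\Dc\le 0<\tau$ by the DPI \eqref{dataprocessing}. This contradicts feasibility. The feasible set $\{\lambda\ge\tau\}$ is closed in a compact simplex and the objective is continuous, so the minimum is attained and is strictly positive. Combining the two bounds gives simultaneous positivity on $(0,\tau^*(R))$.

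For part (ii), with $R\ge I(X;Y)$, we have $\tau^*(R)=0$, so Proposition~\ref{prop:MD_regions}(i) gives $\EMD(\tau,R)=0$ for all $\tau\ge0$. For $\tau\le 0$, the type $P_X\otimes W$ has $\lambda=[I(X;Y)-R]_+=0\ge\tau$ and cost $\Dm+[I(X;Y)-R]_+=0$, so $\EFA(\tau,R)=0$. Every $\tau$ therefore has at least one vanishing exponent, with both vanishing at $\tau=0$ as in Corollary~\ref{prop:softcov}, so the tradeoff zone is empty.

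The only delicate step is the positivity argument. It needs care with the strict versus non-strict inequalities, which I handle by passing to closures, and with possible infinities of $\Dc$ when $W$ has zero entries, where lower semicontinuity suffices.
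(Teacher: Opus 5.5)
Your proposal is correct and follows the paper's own proof. The ingredients match: $\EMD=0$ for $\tau\ge\tau^*(R)$ via types near $P_X\otimes W$, and $\EMD>0$ for $\tau<\tau^*(R)$ because $P_X\otimes W$ is the unique zero of $\Dc$ and has $\lambda=\tau^*(R)$. Likewise $\EFA>0$ for $\tau>0$ because a zero-cost type must be bulk with $\lambda=-\Dc\le 0$, and $P_X\otimes W$ is a zero-cost feasible type for $\EFA$ when $R\ge I(X;Y)$ and $\tau\le 0$. Your compactness step is a more careful version of the paper's continuity remark, but run it on the closed set $\{Q_{XY}:Q_X=P_X,\ Q_{Y|X}\ll W\}$, as your first paragraph suggests, rather than relying on lower semicontinuity over the whole simplex. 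When $W$ has zeros (e.g.\ the Z-channel), infinite-$\Dc$ types have $\lambda=-\infty$, so the closure of $\{\lambda<\tau\}$ in the full simplex contains $P_X\otimes W$ and the minimum of $\Dc$ over it would be $0$.
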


\begin{proof}
\emph{$\EMD(\tau,R)=0$ iff $\tau\ge\tau^*(R)$.}
The only type with $\Dc=0$ is $Q_{XY}=P_X\otimes W$,
which has $\lambda(P_X\otimes W,R)=\tau^*(R)$.
For $\tau\ge\tau^*(R)$, types approaching $P_X\otimes W$ have
$\Dc\to 0$ and $\lambda\to\tau^*(R)^-<\tau$, so they are feasible
with $\Dc\to 0$, giving $\EMD(\tau,R)=0$.
For $\tau<\tau^*(R)$, every type with small $\Dc$ has $Q_{Y|X}$ close to $W$
and hence $\lambda(Q_{XY},R)$ close to $\tau^*(R)>\tau$, so it is not feasible;
thus $\inf_{\lambda(Q_{XY},R)<\tau}\Dc>0$ and $\EMD(\tau,R)>0$.

\emph{$\EFA(\tau,R)>0$ for all $\tau>0$.}
A zero-cost type for $\EFA(\tau,R)$ requires $Q_Y=P_Y$ and $I_Q(X;Y)\le R$.
Any such type has $\lambda(Q_{XY},R)=-\Dc\le 0$,
so it is feasible for $\EFA(\tau,R)$ (i.e.\ $\lambda(Q_{XY},R)\ge\tau$) only if $\tau\le 0$.
Hence for $\tau>0$ no zero-cost type is feasible and $\EFA(\tau,R)>0$.

\emph{Conclusion.}
For $R<I(X;Y)$: both $\EFA(\tau,R)>0$ (since $\tau>0$) and $\EMD(\tau,R)>0$
(since $\tau<\tau^*(R)$) hold simultaneously iff $\tau\in(0,\tau^*(R))$.
For $R\ge I(X;Y)$: $\tau^*(R)=0$, so $\EMD(\tau,R)=0$ for all $\tau\ge 0$
and $\EFA(\tau,R)=0$ for all $\tau\le 0$ (since $P_X\otimes W$ has
$\Dc=0$, $Q_Y=P_Y$, $\IQ=I(X;Y)\le R$, giving a feasible zero-cost
type for $\EFA(\tau,R)$ whenever $\tau\le 0$).
Thus the two zero regions cover all of $\mathbb{R}$ and the tradeoff zone
is empty.
\end{proof}

\section{Numerical Illustration: Z-Channel}
\label{sec:numerics}

In the section, we demonstrate the behavior of both exponents
for a numerical example of a Z-channel.

Let $\calX=\calY=\{0,1\}$ and consider the Z-channel with the following
input-output transition probabilities:
\begin{eqnarray}
W(0|0)&=&1,\\
W(1|0)&=&0,\\
W(0|1)&=&w,~~~\mbox{and}\\
W(1|1)&=&1-w,
\end{eqnarray}
with $w=0.45$ and $P_X(0)=P_X(1)=0.5$. The corresponding mutual information is
$I(X;Y)=0.2441$ nats per channel use.
Joint types are parameterized by $q=Q(0|1)\in(0,1)$, with
$Q(y|0)=W(y|0)$ forced. Defining the binary entropy function and the binary KL
divergence as
\begin{eqnarray}
h_{\rm b}(u)&=&-u\ln u-(1-u)\ln(1-u)\\
D_{\rm b}(u\|v))&=&u\ln \frac{u}{v}+(1-u)\ln\frac{1-u}{1-v},
\end{eqnarray}
where $(u,v)\in[0,1]^2$, this gives
\begin{eqnarray}
I_Q(X;Y)&=&h_{\rm b}
\left(\tfrac{1+q}{2}\right)-\tfrac{1}{2}h_{\rm b}(q),\\
\Dc&=&\tfrac{1}{2}D_{\rm b}(q\|w),~~~~\mbox{and}\\
\Dm&=&D_{\rm b}\left(\tfrac{1+q}{2}\big\|\tfrac{1+w}{2}\right).
\end{eqnarray}
Seven figures follow, one per page, and each one contains a brief description
and discussion, in addition to the figure caption.

\clearpage
\begin{figure}[t]
  \centering
  \includegraphics[width=0.85\textwidth]{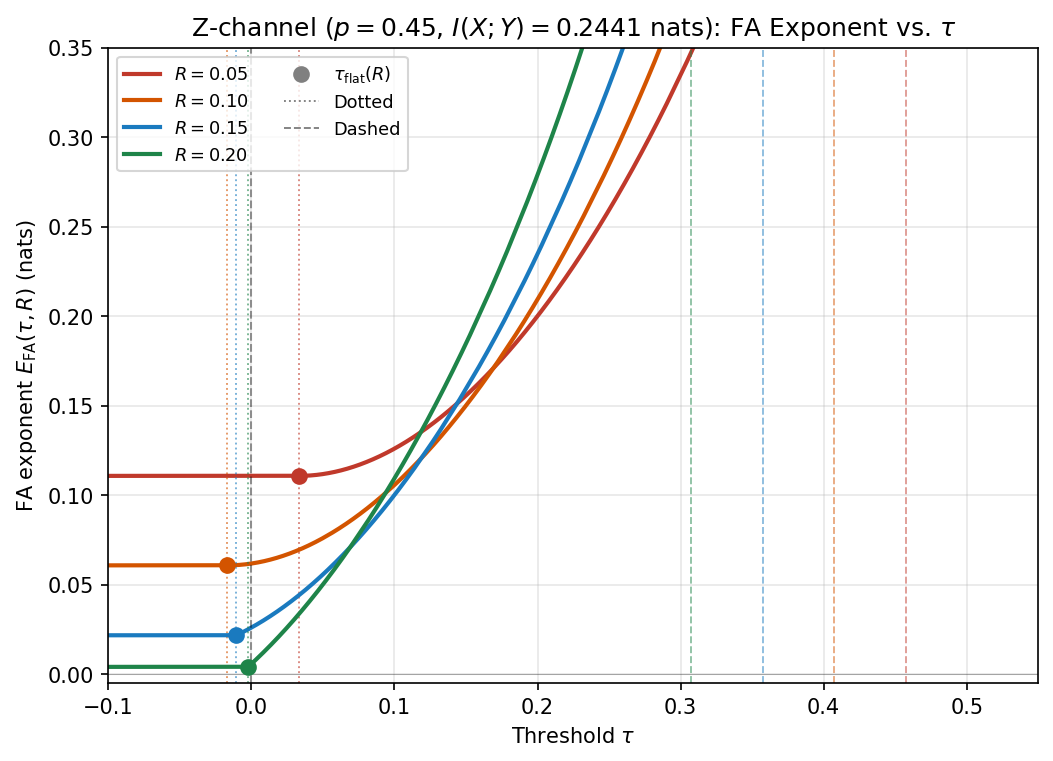}
  \caption{\textbf{FA exponent $\EFA(\tau,R)$} vs.\ $\tau$ for four rates.}
  \label{fig:FA}
\end{figure}

In Fig.\ \ref{fig:FA},
for each rate $R\in\{0.05,0.10,0.15,0.20\}$ (all below $I(X;Y)=0.2441$),
$\EFA(\tau,R)$ is flat at $E_{\mathrm{FA,flat}}(R)=D_{\mbox{\tiny
m}}(Q^*)+[I_{Q^*}(X;Y)-R]_+>0$
for $\tau\le\tau_{\rm flat}(R)$ (filled dots, dotted verticals),
then strictly increases for $\tau>\tau_{\rm flat}(R)$,
diverging to $+\infty$ at $\lambda_{\max}(R)$ (dashed verticals).
$\EFA(\tau,R)$ is continuous throughout.
The flat value $E_{\mathrm{FA,flat}}(R)>0$ (rather than $0$) is unique to
singular channels like the Z-channel with $W(1|0)=0$: a typical output $y^n$
(which has many 1's) can only be explained by codewords that have `1'
at every position where $y^n$ has `1', and with only $e^{nR}$ codewords
drawn i.i.d.\ from $P_X$, the probability that even a single such
compatible codeword exists is exponentially small.
Hence $P_{Y^n|\calC}(y^n)=0$ with high probability over the ensemble of codes,
$\Lambda(y^n)=-\infty$, and FA is exponentially rare
for \emph{any} finite~$\tau$, even $\tau\to-\infty$.
See Figure~\ref{fig:FA_zoom} for a zoom on the active region.

\clearpage
\begin{figure}[t]
  \centering
  \includegraphics[width=0.85\textwidth]{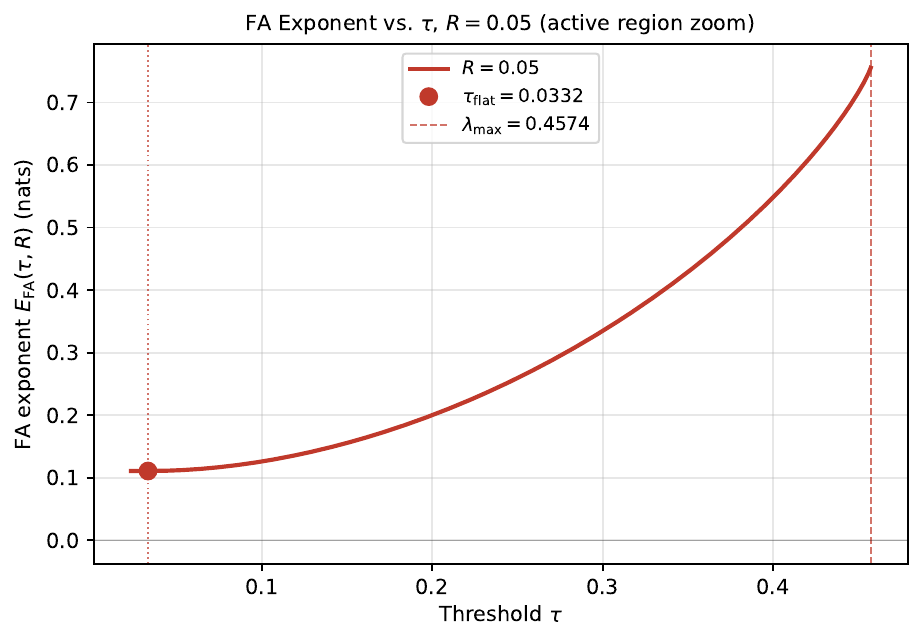}
  \caption{\textbf{Zoom into the active region of $\EFA(\tau,R)$}.
  Normalized axes: $x=0$ at $\tau_{\rm flat}(R)$, $x=1$ at
  $\lambda_{\max}(R)$. Only $R=0.05$ (width $\approx 0.032$ nats) is visible.}
  \label{fig:FA_zoom}
\end{figure}

Fig.\ \ref{fig:FA_zoom} provides a zoom into 
the active region $\tau\in(\tau_{\rm flat}(R),\lambda_{\max}(R))$
for $R=0.05$.
At $\tau_{\rm flat}(R)\approx 0.033$ (filled dot): $\EFA(\tau,R)$ lifts off
its flat value $\approx 0.111$ and begins to increase.
At $\lambda_{\max}(R)\approx 0.457$ (dashed vertical): $\EFA(\tau,R)\to+\infty$.
$\EFA(\tau,R)$ is strictly convex and monotone increasing throughout the active region.

\clearpage
\begin{figure}[t]
  \centering
  \includegraphics[width=0.85\textwidth]{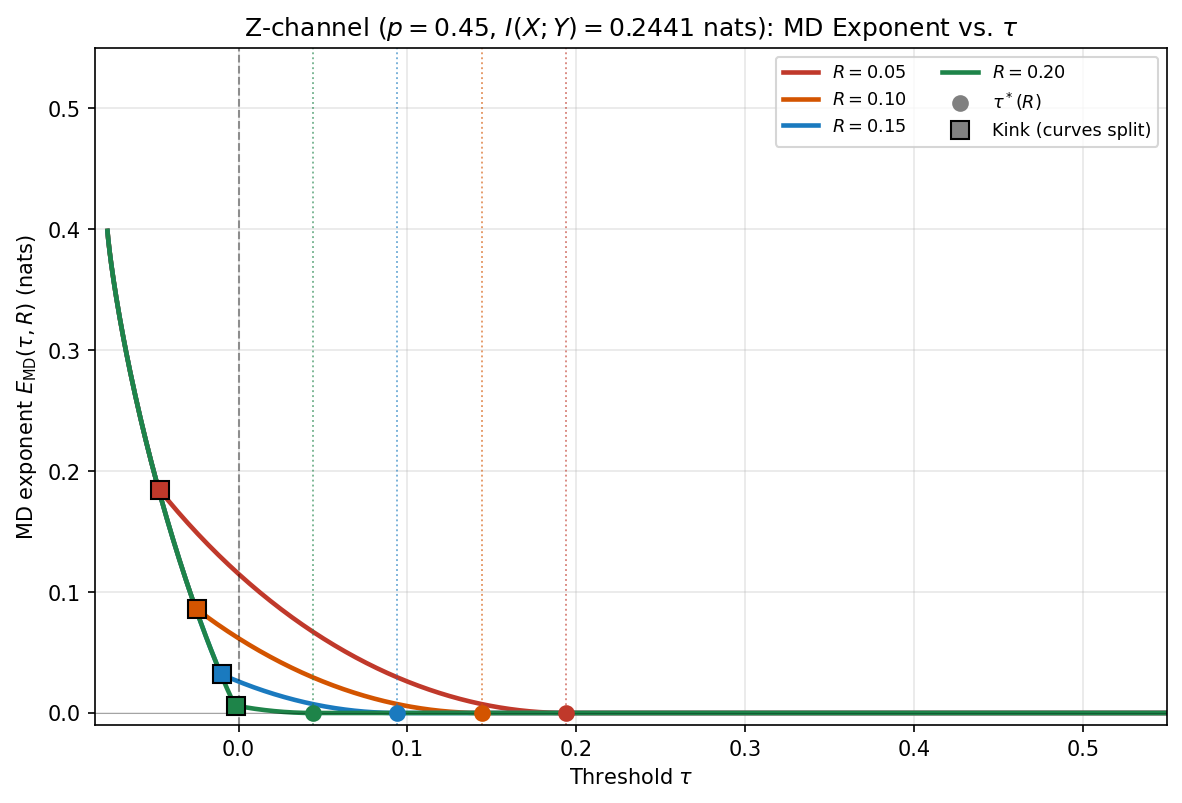}
  \caption{\textbf{MD exponent $\EMD(\tau,R)$} vs.\ $\tau$
  for four rates $R\in\{0.05,0.10,0.15,0.20\}$.
  Filled circles: $\tau^*(R)=[I(X;Y)-R]_+$ (onset of $\EMD(\tau,R)=0$).}
  \label{fig:MD}
\end{figure}

In Fig.\ \ref{fig:MD}
we see the MD analogue of Fig.\ \ref{fig:FA}.
$\EMD(\tau,R)$ is monotone non-increasing in $\tau$ for each 
one of four rates as before.
The curves start from $\tau\approx\lambda_{\min}\approx -0.078$ nats
(minimum achievable $\lambda(Q_{XY},R)$; $\EMD(\tau,R)=+\infty$ for smaller $\tau$).

\emph{Common branch.}
For $\tau$ sufficiently negative (below the leftmost kink at
$\tau_{\rm kink}(R=0.05)\approx-0.047$), all four curves are
\emph{exactly identical}: the minimizing type $Q^*(\tau)$ has
$I_{Q^*}(X;Y)<R$ for all four rates, so the rate constraint is
inactive and $\lambda(Q_{XY}^*,R)=D_{\mbox{\tiny m}}(Q^*_Y)-D_{\mbox{\tiny c}}(Q^*_{XY})$ is the same
for all $R$.

\emph{Sequential splitting.}
As $\tau$ increases, the curves split one by one at the kink points
$\tau_{\rm kink}(R)$ (filled squares), where $I_{Q^*}(X;Y)=R$ exactly
and the minimizing type transitions from bulk to sparse.
This is a first-order phase transition: the slope changes
discontinuously, with the sparse branch (higher rate of decrease)
taking over above the kink.

For $0<\tau<\tau^*(R)=[I(X;Y)-R]_+$: $\EMD(\tau,R)>0$.
At $\tau=\tau^*(R)$ (filled circles, dotted verticals): $\EMD(\tau,R)=0$.

\clearpage
\begin{figure}[t]
  \centering
  \includegraphics[width=0.85\textwidth]{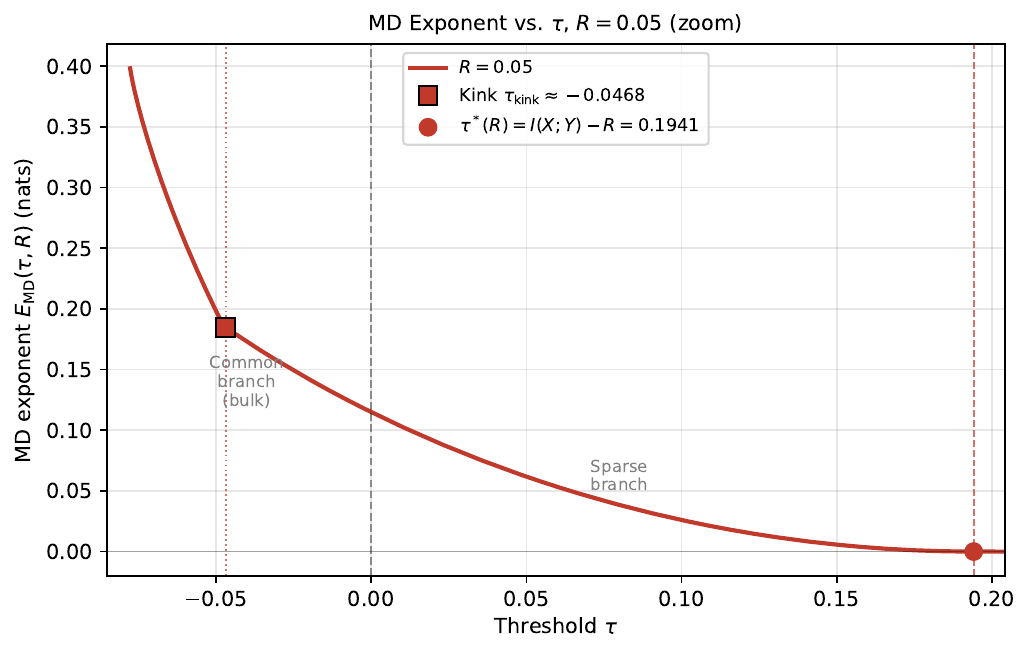}
  \caption{\textbf{Zoom into the active region of $\EMD(\tau,R)$, $R=0.05$.}
  The kink at $\tau_{\rm kink}(R)\approx -0.047$ (filled square, dotted vertical)
  marks the transition from the common bulk branch (left)
  to the sparse branch (right, rate-dependent).
  Filled circle: $\tau^*(R)=I(X;Y)-R\approx 0.194$ (onset of $\EMD(\tau,R)=0$),
  dashed vertical.}
  \label{fig:MD_zoom}
\end{figure}

Fig.\ \ref{fig:MD_zoom}
provides a zoom-in picture on $\EMD(\tau,R)$ for $R=0.05$, revealing the full structure
of the curve from $\lambda_{\min}(R)$ to $\tau^*(R)$.
Left of the kink $\tau_{\rm kink}\approx-0.047$: minimizing type is bulk,
common to all rates.
Right of the kink: minimizer becomes sparse, curves split by rate.
$\EMD(\tau,R)$ is monotone non-increasing throughout, reaching zero at $\tau^*(R)$.

\clearpage
\begin{figure}[t]
  \centering
  \includegraphics[width=0.9\textwidth]{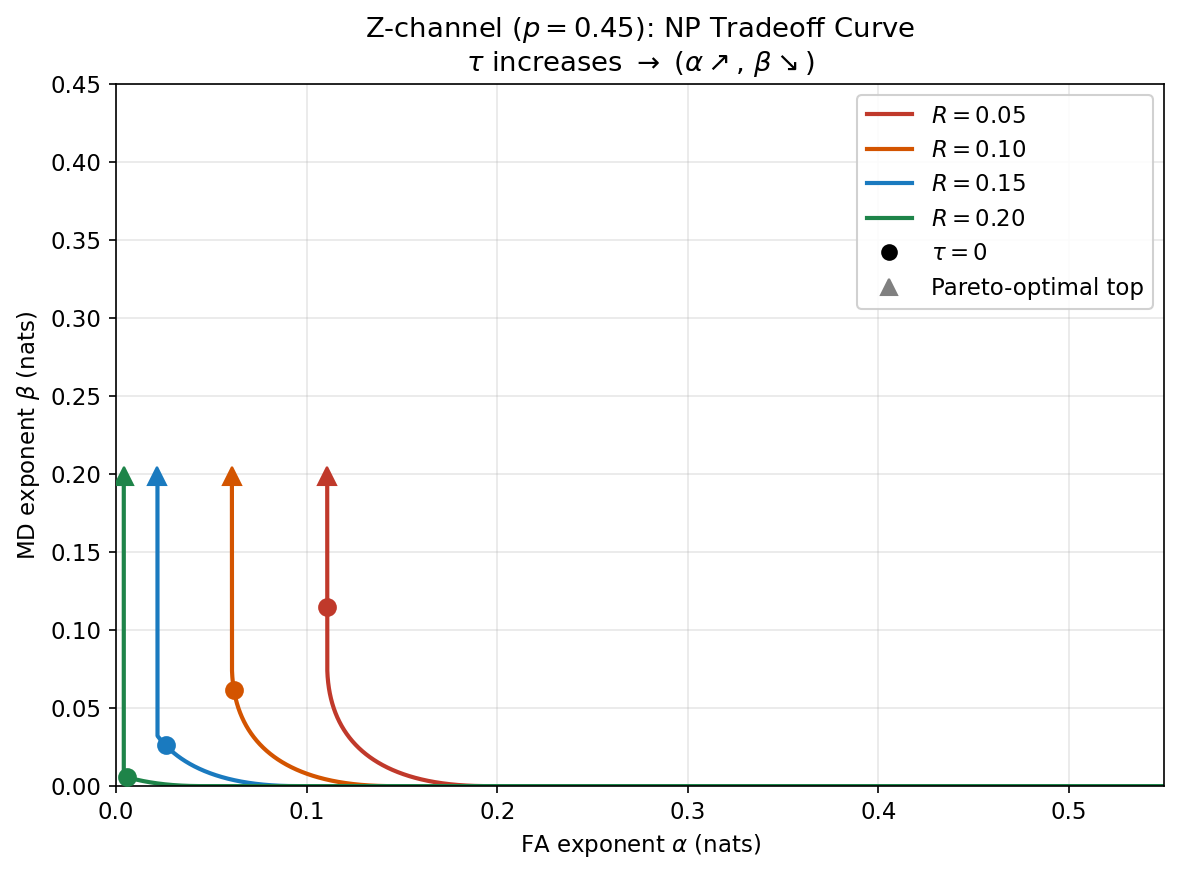}
  \caption{\textbf{Neyman--Pearson tradeoff curve}: $\EMD(\tau,R)$ vs.
$\EFA(\tau,R)$,
  parametrized by $\tau$ ($\tau$ increasing: $\EFA(\tau,R)\nearrow$,
$\EMD(\tau,R)\searrow$).
  \textbf{Left}: raw parametric curve; vertical segments arise because
  $\EFA(\tau,R)$ is flat while $\EMD(\tau,R)$ decreases. Triangles: top of each vertical.
  \textbf{Right}: upper-envelope curve (each flat segment collapsed to
  its highest point).}
  \label{fig:tradeoff}
\end{figure}

Fig.\ \ref{fig:tradeoff} gives 
the parametric curve $(\EFA(\tau,R),\EMD(\tau,R))$ as $\tau$ sweeps the
tradeoff range. It is shown for four rates.
Each curve has a vertical segment where $\EFA(\tau,R)$ is flat
(Proposition~\ref{prop:flat}) while $\EMD(\tau,R)$ decreases: during this portion
there is no genuine tradeoff, only a loss in MD performance at fixed $\EFA(\tau,R)$.
The right panel collapses each vertical segment to its highest point,
revealing the genuine Neyman--Pearson tradeoff: to achieve larger $\EFA(\tau,R)$
one must sacrifice $\EMD(\tau,R)$.

\clearpage
\begin{figure}[t]
  \centering
  \includegraphics[width=0.9\textwidth]{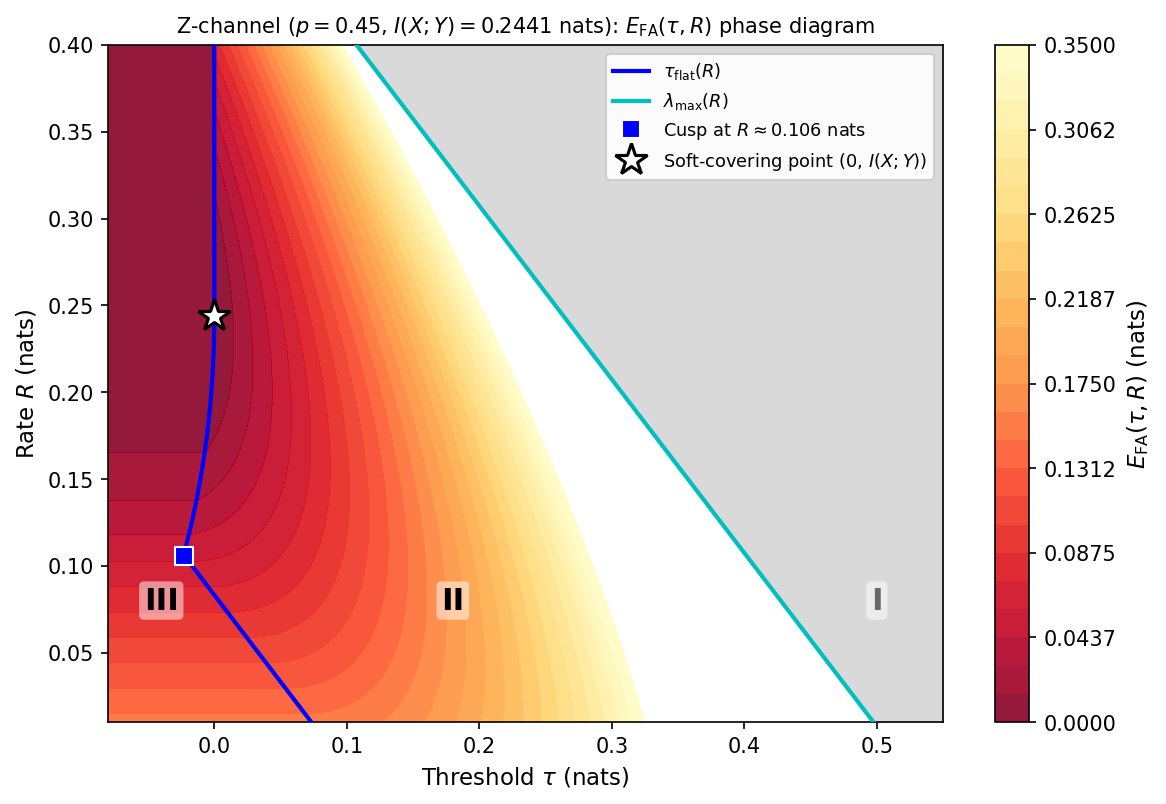}
  \caption{\textbf{FA exponent $\EFA(\tau,R)$}: phase diagram
  in the $(\tau,R)$ plane (Proposition~\ref{prop:flat}).
  Region~\textbf{III} (left of blue curve, $\tau\le\tau_{\rm flat}(R)$):
  $\EFA$ is flat in $\tau$ (for fixed $R$) but varies with $R$.
  Region~\textbf{II} (between blue and cyan curves,
  $\tau_{\rm flat}(R)<\tau<\lambda_{\max}(R)$): $\EFA$ strictly increasing.
  Region~\textbf{I} (grey, right of cyan curve, $\tau>\lambda_{\max}(R)$):
  $\EFA=+\infty$.
  Blue curve: $\tau_{\rm flat}(R)$; cyan curve: $\lambda_{\max}(R)$.
  Blue square: cusp in $\tau_{\rm flat}(R)$ at $R\approx 0.106$ nats.
  White star: soft-covering point $(0,\,I(X;Y))$,
  at the boundary between Regions~II and~III.}
  \label{fig:alpha_phase}
\end{figure}

Fig.\ \ref{fig:alpha_phase} displays the three regions of
Proposition~\ref{prop:flat} in the $(\tau,R)$ plane.
Moving from left to right in $\tau$: Region~III (left of the blue
curve $\tau_{\rm flat}(R)$, where $\EFA$ is flat in $\tau$ for
fixed $R$ but varies with $R$) transitions into
Region~II (strictly increasing $\EFA$, between the blue and
cyan curves), and finally into Region~I ($\EFA=+\infty$,
grey, to the right of the cyan curve $\lambda_{\max}(R)$).
The white star marks the soft-covering point $(\tau,R)=(0,I(X;Y))$;
it lies at the boundary between Regions~II and~III
(since $\tau_{\rm flat}(R)\to 0$ as $R\nearrow I(X;Y)$)
and corresponds to $\EFA=E_{\mathrm{FA,flat}}(R)>0$, not zero.
The blue square marks the cusp in $\tau_{\rm flat}(R)$ at
$R=R_{\rm cr}\approx 0.106$ nats, where the unconstrained minimizer
$Q^*_{XY}$ transitions from a sparse type ($I_{Q^*}>R$, slope $-1$)
to a bulk type, creating the visible kink in the blue curve.

\clearpage
\begin{figure}[t]
  \centering
  \includegraphics[width=0.9\textwidth]{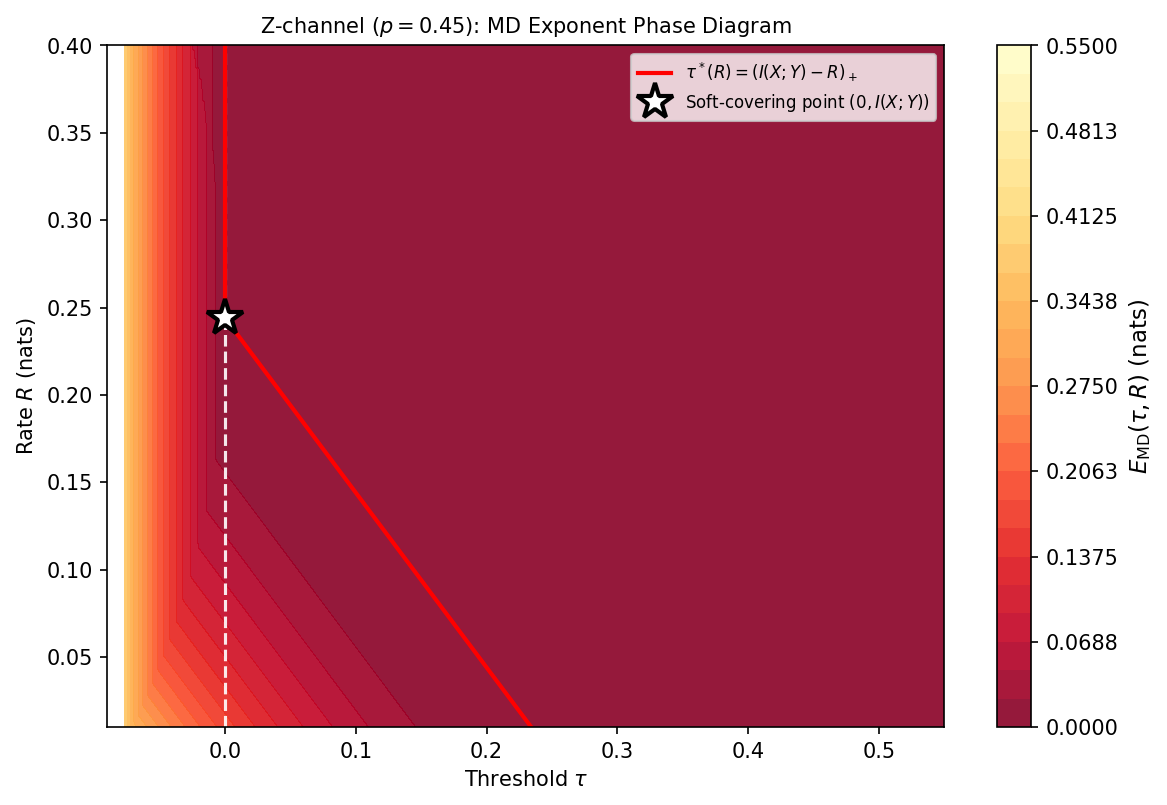}
  \caption{\textbf{MD exponent $\EMD(\tau,R)$}: phase diagram.
  Shaded region ($\tau\le 0$): $\EMD(\tau,R)$ finite and positive when
  the feasible set $\{\lambda(Q_{XY},R)<\tau\}$ is non-empty
  (Remark~\ref{rem:jump}); $\EMD(\tau,R)=+\infty$ otherwise.
  Colored region ($\tau>0$): $\EMD(\tau,R)$ finite, with sharp transition
  to $0$ at $\tau^*(R)=[I(X;Y)-R]_+$ (red line).
  White dashed line: $\tau=0$. White star: soft-covering point $(0,I(X;Y))$.}
  \label{fig:beta_phase}
\end{figure}

In Fig.\ \ref{fig:beta_phase}
the shaded region ($\tau\le 0$) has finite $\EMD(\tau,R)$ given by
$\min_{\lambda(Q_{XY},R)<\tau}D_{\mbox{\tiny c}}(Q_{XY})$ when the feasible set
is non-empty, and $\EMD(\tau,R)=+\infty$ otherwise (Remark~\ref{rem:jump}).
The white dashed line at $\tau=0$ marks $\EMD(\tau,R)=0$.
The colored region ($\tau>0$) shows finite $\EMD(\tau,R)$, with a sharp
transition to $0$ at $\tau^*(R)=[I(X;Y)-R]_+$ (red line, slope $-1$).
Together, the two phase diagrams show that the Neyman--Pearson tradeoff zone
$0<\tau<\tau^*(R)$ shrinks as $R\nearrow I(X;Y)$ and vanishes at
$R=I(X;Y)$, the soft-covering threshold.

\section{Conclusion and Outlook}
\label{sec:conclusion}

We studied the soft covering lemma through the lens of Neyman--Pearson
hypothesis testing, asking: can an observer distinguish the random
codebook mixture $P_{Y^n|\calC}$ from the i.i.d.\ distribution
$P_Y^{\otimes n}$?
We derived exact single-letter formulas for the FA and MD
exponents $\EFA(\tau,R)$ and $\EMD(\tau,R)$ for all rates $R\ge 0$
and all thresholds $\tau\in\mathbb{R}$, using the type-class
enumeration toolbox of~\cite{MW25}.
The main findings are:

\begin{enumerate}[noitemsep]
\item \emph{Soft covering as a phase transition.}
At $R=I(X;Y)$ and $\tau=0$, both exponents simultaneously vanish ---
the Neyman--Pearson exponent characterization of the soft covering phenomenon.
For $R<I(X;Y)$, a genuine Neyman--Pearson tradeoff interval $(0,I(X;Y)-R)$ exists
where both exponents are positive.
For $R\ge I(X;Y)$, the two zero regions cover all of $\mathbb{R}$,
leaving no tradeoff interval.

\item \emph{Rich phase structure.}
The FA exponent $\EFA(\tau,R)$ is flat (constant in $\tau$) below
$\tau_{\rm flat}(R)$ and strictly increasing above it, diverging at
$\lambda_{\max}(R)$.
The MD exponent $\EMD(\tau,R)$ is strictly decreasing in $\tau$,
with a kink at $\tau_{\rm kink}(R)<0$ (a first-order phase transition
where the minimizing type changes from bulk to sparse),
and reaches zero at $\tau^*(R)=(I(X;Y)-R)_+$.

\item \emph{Structural asymmetry.}
The FA exponent penalizes both the empirical output deviation from
$P_Y$ and the rate surplus $[I_Q(X;Y)-R]_+$.
The MD exponent penalizes only the channel divergence $\Dc$,
with an additional constraint $\Delta(Q_Y,R)<\tau$ that identifies
which transmitted types contribute --- automatically satisfied for $\tau>0$
by the DPI.
\end{enumerate}

Several directions are open for future work.

\begin{enumerate}[noitemsep]
\item \emph{Beyond the annealed average.}
The exponents derived here are for the annealed
(codebook-averaged) probabilities $\mathbb{E}_\calC[\alpha_n]$
and $\mathbb{E}_\calC[\beta_n]$.
The quenched (almost-sure or typical-codebook) exponents
may differ and are generally harder to characterize.
The regime $R<I(X;Y)$ may exhibit a gap between annealed and quenched
exponents, as is common in disordered systems (see, e.g., \cite{Merhav26}).

\item \emph{Mismatched and compound channels.}
The detector here is the optimal Neyman--Pearson test based on the true channel $W$.
Characterizing the exponents under a mismatched detector
(one that assumes a wrong channel $\tilde W$) would be natural,
especially in the context of covert communications
where the adversary may not know the exact codebook distribution.

\item \emph{Finite-block-length refinements.}
The exponent results give the leading exponential term.
Sub-exponential (polynomial) pre-factors, as in~\cite{MW25},
would sharpen the approximation and may reveal further phase transitions
at the second order.

\item \emph{Other channels and source models.}
The analysis here is for discrete memoryless channels.
Extensions to continuous alphabet channels (most notably, Gaussian), channels with memory,
or multi-terminal settings (broadcast, multiple access) would
broaden the scope and may uncover new structural phenomena.
\end{enumerate}

\section*{Appendix -- Proof of Theorem~\ref{thm:main}}
\label{sec:proof}
\renewcommand{\theequation}{A.\arabic{equation}}
    \setcounter{equation}{0}

\subsection{The FA exponent}

We prove~\eqref{eq:FA_exp} for $R>0$;
the case $R=0$ follows by a direct type-counting argument
(see Remark~\ref{rem:r0}).

First, observe that
\begin{equation}
\alpha_n(\tau,R)
=\sum_{y^n} P_Y^{\otimes n}(y^n)\cdot
\Pr\big\{S(y^n)\ge e^{n\theta(y^n)}|y^n\bigr\},
  \label{eq:FA_start}
\end{equation}
where conditional probabilities $\Pr\{\cdot|y^n\}$ are w.r.t.\ the randomness
of $\calC$ while $y^n$ is fixed, and where,
using \eqref{eq:LLR_S}, the FA condition $\Lambda(y^n)\ge\tau$
is equivalent to $S(y^n)\ge e^{n\theta(y^n)}$ with
\begin{equation}
\theta(y^n)=\tau+R-H_Q(Y)-\Dm,
\qquad Q_Y = \hat P_{y^n}.
\label{eq:theta}
\end{equation}
Note that $\theta(y^n)$ depends on $y^n$ only through its type
$Q_Y=\hat P_{y^n}$.
Now, fix $y^n$, let $\theta=\theta(y^n)$, and $Q_Y=\hat P_{y^n}$.
Then,
\begin{equation}
\Pr\bigl\{S(y^n)\ge e^{n\theta}|y^n\bigr\}
\doteq
\Pr\left[\bigcup_{\substack{Q_{XY}:~Q_X=P_X\\Q_Y=\hat P_{y^n}}}
\bigl\{N(Q_{XY}|y^n)\ge e^{n[\theta+\ell(Q_{XY})]}\bigr\}\bigg|y^n\right],
\label{eq:FA_union}
\end{equation}
where the union is over all types, $Q_{XY}$, with marginals $Q_X=P_X$ and
$Q_Y=\hat{P}_{y^n}$. Since this is a union over a sub-exponential number of
events, this
probability is dominated by the largest probability term
of the form $\Pr\{N(Q_{XY}|y^n)\ge e^{n[\theta+\ell(Q_{XY})]}\}$. We apply
Theorem~4.1 of \cite{MW25} (see \eqref{eq:T1u}) to each
$N(Q_{XY}|y^n)\sim\mathrm{Binomial}(M,e^{-n\IQ})$, and the threshold exponent
\begin{eqnarray}
\theta+\ell(Q_{XY})
&=&\tau+R-H_Q(Y)-\Dm+H_Q(Y|X)+\Dc\nonumber\\
&=&\tau-(\Dm-\Dc)+R-\IQ.
\label{eq:threshold}
\end{eqnarray}
We next distinguish between bulk types ($\IQ\le R$) and sparse types ($\IQ>R$).
Beginning from bulk types, 
by \eqref{eq:T1u}, the upper-tail exponent is $0$ when
$R-\IQ\ge\theta+\ell(Q_{XY})$, and $\infty$ otherwise.
On substituting \eqref{eq:threshold}, we obtain
\begin{equation}
R-\IQ\ge\tau-(\Dm-\Dc)+R-\IQ.
\end{equation}
The terms $R$ and $\IQ$ on both sides cancel, giving $\Dm-\Dc\ge\tau$.
Since $\IQ\le R$, definition~\eqref{eq:lambda} gives
$\lambda(Q_{XY},R)=\Dm-\Dc$,
and so, the condition is $\lambda(Q_{XY},R)\ge\tau$.
Therefore, for bulk types,
\begin{equation}
\Pr\bigl\{N(Q_{XY}|y^n)>e^{n[\theta+\ell(Q_{XY})]}|y^n\bigr\}
\doteq
\begin{cases}1 & \lambda(Q_{XY},R)\ge\tau,\\ 
0 & \lambda(Q_{XY},R)<\tau.\end{cases}
\end{equation}
Moving on to sparse types,
eq.\ \eqref{eq:T1u} tells us that the upper-tail exponent is $\IQ-R$ when
$[R-\IQ]_+=0\ge\theta+\ell(Q_{XY})$, and $\infty$ otherwise.
Substituting \eqref{eq:threshold}, the condition $\theta+\ell(Q_{XY})\le 0$
becomes:
\begin{equation}
\tau-(\Dm-\Dc)+R-\IQ\le 0,
\end{equation}
i.e., $\Dm-\Dc+\IQ-R\ge\tau$.
Since $\IQ>R$, definition~\eqref{eq:lambda} gives
$\lambda(Q_{XY},R)=\Dm-\Dc+\IQ-R$,
so the condition is again $\lambda(Q_{XY},R)\ge\tau$.
Therefore, for sparse types,
\begin{equation}
\Pr\big\{N(Q_{XY}|y^n)>e^{n[\theta+\ell(Q_{XY})]}|y^n\bigr\}
\doteq
\begin{cases}e^{-n(\IQ-R)} & \lambda(Q_{XY},R)\ge\tau,\\ 
0 & \lambda(Q_{XY},R)<\tau.\end{cases}
\end{equation}
For each fixed marginal type $Q_Y$, we maximize over all conditional
types $Q_{X|Y}$ with $Q_X=P_X$.
The threshold $\theta=\tau+R-H_Q(Y)-\Dm$ depends on $y^n$
only through $Q_Y$ (via $\hat P_{y^n}$), while $\ell(Q_{XY})$ and
hence $\theta+\ell(Q_{XY})$ depend on the full joint type $Q_{XY}$.
After maximizing over $Q_{X|Y}$, however, the dominant exponent
depends on $y^n$ only through $Q_Y$.
We may therefore group the sum in \eqref{eq:FA_start} by types:
\begin{equation}
\alpha_n(\tau,R)
\doteq\sum_{Q_Y}
\underbrace{\sum_{y^n:~\hat P_{y^n}=Q_Y} P_Y^{\otimes
n}(y^n)}_{\doteq\,e^{-n\Dm}}
\cdot\max_{\substack{Q:\,Q_X=P_X\\Q_Y\text{ fixed}}}
\Pr\bigl\{N(Q_{XY}|y^n)\ge e^{n[\theta+\ell(Q_{XY})]}|y^n\bigr\}.
\label{eq:FA_grouped}
\end{equation}
The dominant contribution to \eqref{eq:FA_grouped} comes from the
type $Q_{XY}$ minimizing the total exponent. Since 
both cases unify as $\Dm+[I_Q(X;Y)-R]_+$ subject to $\lambda(Q_{XY},R)\ge\tau$,
this yields \eqref{eq:FA_exp}.

\subsection{The MD exponent}

Similarly as above, we prove~\eqref{eq:MD_exp} for $R>0$,
the case $R=0$ follows by a direct argument
(see Remark~\ref{rem:r0}).

Owing to the symmetry of the random coding mechanism, we assume, without loss
of generality that the transmitted codeword is $x^n(1)$.
Under $\calH_1$, the pair $(x^n(1),y^n)$ is drawn from $P_{X^n}\times W^n$.
By the method of types:
\begin{equation}
\Pr\bigl\{(x^n(1),y^n)\in\calT(Q_{XY})\bigr\}
\doteq e^{-n\Dc}
\label{eq:pair_prob}
\end{equation}
for any joint type $Q_{XY}$ with $Q_X=P_X$.
We condition on this type and treat the remaining $M-1$ codewords
$x^n(m)$, $m=2,3,\ldots,M$ as random.

We next show that if $\lambda(Q_{XY},R)\ge\tau$ then the MD event is virtually impossible.
Given that $(x^n(1),y^n)$ has joint type $Q_{XY}$ (so $y^n$ has marginal
type $Q_Y$), recall that $\theta=\tau+R-H_Q(Y)-\Dm$ as in~\eqref{eq:theta}.
Decompose
\begin{equation}
S(y^n) = e^{-n\ell(Q_{XY})} + \widetilde S(y^n),
\label{eq:S_decomp}
\end{equation}
where $e^{-n\ell(Q_{XY})}=W^n(y^n|x^n(1))$ is the deterministic contribution
of the transmitted codeword and
$\widetilde S(y^n)=\sum_{m=2}^M W^n(y^n|x^n(m))$ is the
contribution of the remaining random codewords.
The MD event is equivalent to $S(y^n)<e^{n\theta}$.
Once again, we treat bulk types and sparse types separately.

Considering sparse types first, observe that 
since $S(y^n)\ge e^{-n\ell(Q_{XY})}$, it suffices to show $-\ell(Q_{XY})\ge\theta$.
Expanding using the definitions of $\ell(Q_{XY})$ and $\theta$:
\begin{align*}
  -\ell(Q_{XY})\ge\theta
  &\iff -H_Q(Y|X)-\Dc\ge\tau+R-H_Q(Y)-\Dm\\
  &\iff \Dm-\Dc+\IQ-R\ge\tau\\
  &\iff \lambda(Q_{XY},R)\ge\tau,
\end{align*}
where the last equivalence uses the relation
$\lambda(Q_{XY},R)=\Dm-\Dc+\IQ-R$, which holds for sparse types
($\IQ>R$). This holds by assumption, so $S(y^n)\ge e^{n\theta}$
deterministically given $(x^n(1),y^n)$, and so, the MD event is impossible. 

Passing on to bulk types, 
to show that the MD event is virtually impossible when
$\lambda(Q_{XY},R)>\tau$, assume conversely that $\lambda(Q_{XY},R)>\tau$
(handling the boundary $\lambda(Q_{XY},R)=\tau$ by continuity at the end).
Consider the interfering codewords that share the \emph{same} joint type
$Q_{XY}$ with $y^n$
as the transmitted one, $x^n(1)$.
Their contribution to $\widetilde S(y^n)$ gives
\begin{equation}
S(y^n)\ge\widetilde S(y^n)\ge
\widetilde N(Q_{XY}|y^n)\cdot e^{-n\ell(Q_{XY})},
\label{eq:S_lb_bulk}
\end{equation}
where here $\widetilde N(Q_{XY}|y^n)\sim\mathrm{Binomial}(M-1,\,e^{-n\IQ})$
with $M-1\doteq e^{nR}$. Expanding $\theta+\ell(Q_{XY})$ from the definitions,
we have
\begin{align}
\theta+\ell(Q_{XY})
&=(\tau+R-H_Q(Y)-\Dm+(H_Q(Y|X)+\Dc
\notag\\
&=\tau-\bigl(\Dm-\Dc\bigr)+R-\IQ.
\label{eq:theta_plus_ell}
\end{align}
Since $\Dm-\Dc=\lambda(Q_{XY},R)\ge\tau$ and $R-\IQ\ge 0$, we get
$\theta+\ell(Q_{XY})\le R-\IQ$, with strict inequality when $\lambda(Q_{XY},R)>\tau$.
By Theorem~4.1 of \cite{MW25}~\eqref{eq:T1l}, $\Pr\{\widetilde N(Q_{XY}|y^n)
< e^{n(\theta+\ell(Q_{XY}))}|x^n(1),y^n\}$ is doubly exponentially small whenever
$\theta+\ell(Q_{XY})<R-\IQ$, i.e.\ whenever $\lambda(Q_{XY},R)>\tau$.
Hence with probability $\doteq 1$,
\begin{equation}
S(y^n)\ge\widetilde N(Q_{XY}|y^n)\cdot e^{-n\ell(Q_{XY})}
\ge e^{n[\theta+\ell(Q_{XY}])}\cdot e^{-n\ell(Q_{XY})}=e^{n\theta},
\end{equation}
so the MD event is virtually impossible in the sense that its probability
decays doubly exponentially.

In both cases, $\lambda(Q_{XY},R)\ge\tau$ implies
$\Pr\{\mathrm{MD}|x^n(1),y^n\}\doteq 0$.
Hence only types $Q_{XY}$ of $(x^n(1),y^n)$ with $\lambda(Q_{XY},R)<\tau$ can contribute
to the MD probability on the exponential scale. We next evaluate this
contribution.

Now, fix a type $Q_{XY}$ of $(x^n(1),y^n)$ with $\lambda(Q_{XY},R)<\tau$.
As mentioned earlier, the threshold $\theta=\tau+R-H_Q(Y)-\Dm$ depends on $y^n$ only through
its marginal $Q_Y$, and on $\tau$, but not on any other property of $Q_{XY}$.
For each possible joint type $Q'_{XY}$ of an interfering codeword
$x^n(m)$, $m=2,3,\ldots,M$, with $y^n$, the enumerator
$\widetilde N(Q'_{XY}|y^n)\sim\mathrm{Binomial}(M-1,e^{-nI_{Q'}(X;Y)})$.
The event $\{\widetilde S(y^n)<e^{n\theta}\}$
is equivalent to the event that $\widetilde N(Q'_{XY}|y^n)\le e^{n[\theta+\ell(Q'_{XY})]}$
for all $Q'_{XY}$ simultaneously.
To assess the probability of this intersection probability, we invoke 
Theorem~4.3 of \cite{MW25} (see \eqref{eq:T3}).
To this end, we compute $\theta+\ell(Q'_{XY})$ for each type $Q'_{XY}$ of an
interfering codeword.
Since all interfering types $Q'_{XY}$ share the marginal $Q_Y$ of $y^n$,
we have $D_{\mbox{\tiny m}}(Q_Y')=\Dm$, and so,
\begin{eqnarray}
\theta+\ell(Q'_{XY})
&=&\tau+R-H_Q(Y)-\Dm+H_{Q'}(Y|X)+D_{\mbox{\tiny c}}(Q'_{XY})\nonumber\\
&=&\tau-[D_{\mbox{\tiny m}}(Q'_Y)-D_{\mbox{\tiny c}}(Q'_{XY})]+R-I_{Q'}(X;Y).
\label{eq:theta_ell_Qp}
\end{eqnarray}
Note that this depends on $Q_{XY}'$ of the interfering codewords, and on $\tau$, but \emph{not} on 
the type $Q_{XY}$ of $(x^n(1),y^n)$ beyond its marginal $Q_Y$, which is already
absorbed into $D_{\mbox{\tiny m}}(Q'_Y)=\Dm$.
Upon applying Theorem~4.3 of \cite{MW25}, we obtain
\begin{equation}
\Pr\Bigl[
\bigcap_{Q_{XY}'}\bigl\{\widetilde N(Q'_{XY}|y^n)\le
e^{n[\theta+\ell(Q'_{XY})]}\bigg|x^n(1),y^n\bigr\}
\Bigr]
\doteq
\mathbf{1}\Bigl\{
\min_{Q_{XY}'}\bigl[I_{Q'}(X;Y)-R+[\theta+\ell(Q'_{XY})]_+\bigr]>0
\Bigr\}.
\label{eq:T3_applied}
\end{equation}
Let us define then
\begin{equation}
f(Q_{XY}')=I_{Q'}(X;Y)-R+[\theta+\ell(Q'_{XY})]_+.
\end{equation}
Using the simple identity, $a+[b-a]_+\equiv\max\{a,b\}$ with $a=I_{Q'}(X;Y)-R$ and
$b=\tau-[D_{\mbox{\tiny m}}(Q_Y)-D_{\mbox{\tiny c}}(Q'_{XY})]$, and~\eqref{eq:theta_ell_Qp},
we may rewrite the function $f$ as
\begin{equation}
f(Q'_{XY})=\max\bigl\{I_{Q'}(X;Y)-R,
\tau-[D_{\mbox{\tiny m}}(Q_Y)-D_{\mbox{\tiny c}}(Q'_{XY})]\bigr\}.
\label{eq:summand}
\end{equation}
We next simplify $f(Q'_{XY})$.
For sparse interfering types ($I_{Q'}(X;Y)>R$), 
$f(Q'_{XY})\ge I_{Q'}(X;Y)-R>0$ always.
Hence sparse types never `threaten' the occurrence of the event
$\{\min_{Q_{XY}'}f(Q_{XY}')>0\}$ and can be excluded.
For bulk interfering types, $I_{Q'}(X;Y)-R\le 0$, and so, the occurrence of the
event $\{\min_{Q_{XY}'}f(Q_{XY}')>0\}$ depends on the sign of
$\tau-[D_{\mbox{\tiny m}}(Q_Y)-D_{\mbox{\tiny c}}(Q'_{XY})]$.
Thus the condition $\min_{Q_{XY}'}f(Q_{XY}')>0$ over all $Q_{XY}'$ reduces to
\begin{equation}
\max_{\substack{Q'_{XY}:\,Q'_X=P_X,\,Q'_Y=Q_Y\\I_{Q'}(X;Y)\le R}}
\bigl[D_{\mbox{\tiny m}}(Q_Y)-D_{\mbox{\tiny c}}(Q'_{XY})\bigr]<\tau,
\end{equation}
which is exactly the condition $\Delta(Q_Y,R)<\tau$ as defined in the theorem.

For $\tau>0$, the DPI \eqref{dataprocessing} gives $D_{\mbox{\tiny c}}(Q'_{XY})\ge D_{\mbox{\tiny m}}(Q'_Y)=\Dm$,
and so, $D_{\mbox{\tiny m}}(Q_Y)-D_{\mbox{\tiny c}}(Q'_{XY})\le 0 < \tau$,
which means that the condition $\Delta(Q_Y,R)<\tau$ holds automatically for
every $Q_{XY}$, and so, the corresponding constraint is slack.
Thus $\Pr\{\widetilde S(y^n)<e^{n\theta}\}\doteq 1$
for every feasible transmitted type.
On the other hand, for $\tau\le 0$, $D_{\mbox{\tiny m}}(Q_Y)-D_{\mbox{\tiny c}}(Q'_{XY})$
may be larger than or equal to $\tau$, making $\Delta(Q_Y,R)\ge\tau$ possible,
in which case the indicator in~\eqref{eq:T3_applied} equals $0$
and then $\Pr\{\widetilde S(y^n)<e^{n\theta}|x^n(1),y^n\}\doteq 0$.

The conclusion is that for $\tau> 0$, $\Pr\{\widetilde S(y^n)<e^{n\theta}|x^n(1),y^n\}\doteq 1$
for every transmitted type $Q_{XY}$ with $\lambda(Q_{XY},R)<\tau$.
Together with $\Pr[\mathrm{MD}|Q_{XY}]\doteq 0$ when $\lambda(Q_{XY},R)\ge\tau$),
we have $\Pr\{\mathrm{MD}\mid Q_{XY}|x^n(1),y^n\}\doteq\mathbf{1}\{\lambda(Q_{XY},R)<\tau\}$.
Summing over all transmitted types:
\begin{equation}
\beta_n(\tau,R)
\doteq
\sum_{\substack{Q:\,Q_X=P_X\\\lambda(Q_{XY},R)<\tau}} e^{-n\Dc}
\doteq
\exp\Bigl\{-n\min_{\substack{Q:\,Q_X=P_X\\\lambda(Q_{XY},R)<\tau}}\Dc\Bigr\},
\label{eq:MD_final}
\end{equation}
giving $\EMD(\tau,R)=\min_{\lambda(Q_{XY},R)<\tau}\Dc$.
However, for $\tau\le 0$ 
a type $Q_{XY}$ of $(x^n(1),y^n)$ with $\lambda(Q_{XY},R)<\tau\le 0$
contributes to $\beta_n$ at the exponential scale provided that the additional
condition, $\Delta(Q_Y,R)<\tau$, holds too.
On the other hand, if $\Delta(Q_Y,R)<\tau$,
Theorem~4.3 of \cite{MW25} yields $\Pr\{\widetilde
S(y^n)<e^{n\theta}|x^n(1),y^n\}\doteq 1$,
and this type contributes $e^{-nD_{\mbox{\tiny c}}(Q_{XY})}$ to $\beta_n$.
If $\Delta(Q_Y,R)\ge\tau$,
some bulk type of an interfering codeword $Q'_{XY}$ gives $f(Q'_{XY})\le 0$,
and Theorem~4.3 of \cite{MW25} gives $\Pr\{\widetilde
S(y^n)<e^{n\theta}|x^n(1),y^n\}\doteq 0$
(doubly exponentially small), and this type does not contribute
at the exponential scale.
Hence,
\begin{equation}
\beta_n(\tau,R)\doteq\sum_{\{Q_{Y|X}:~\lambda(Q_{XY},R)<\tau,~\Delta(Q_Y,R)<\tau\}}e^{-n\Dc},
\end{equation}
giving~\eqref{eq:MD_exp}.
This establishes~\eqref{eq:MD_exp} in both cases.

\end{document}